\newtheorem{theorem}{Theorem}[section]
\newtheorem{lemma}[theorem]{Lemma}
\newtheorem{corollary}[theorem]{Corollary}
\theoremstyle{definition}
\newtheorem{definition}[theorem]{Definition}
\numberwithin{equation}{section}
\def \bC {\mathbb{C}}
\def \bN {\mathbb{N}}
\def \bR {\mathbb{R}}
\def \bZ {\mathbb{Z}}
\def \cA {\mathcal{A}}
\def \cB {\mathcal{B}}
\def \cG {\mathcal{G}}
\def \cL {\mathcal{L}}
\def \sinc {\,{\rm sinc}\,}
\def \Re {\,{\rm Re}\,}
\def \Im {\,{\rm Im}\,}
\def \dist {\,{\rm dist}\,}
\begin{document}
\title{Exponential Approximation of Band-limited Functions from Nonuniform Sampling by Regularization Methods}
\author{
Yunfei Yang \thanks{Department of Mathematics, The Hong Kong University of Science and Technology, Clear Water Bay, Kowloon, Hong Kong, P.R. China. E-mail address: {\it yyangdc@connect.ust.hk}.}
\and
Haizhang Zhang \thanks{School of Mathematics (Zhuhai), Sun Yat-sen University, Zhuhai, P.R. China. Supported in part by Natural
Science Foundation of China under grants 11971490 and 12126610. E-mail address: {\it zhhaizh2@sysu.edu.cn}.}
}
\date{}
\maketitle

\begin{abstract}
Reconstructing a band-limited function from its finite sample data is a fundamental task in signal analysis. A Gaussian regularized Shannon sampling series has been proved to be able to achieve exponential convergence for uniform sampling. Whether such an exponential convergence can also be achieved for nonuniform sampling by regularization methods was unresolved. In this paper, we give an affirmative answer to this question. Specifically, we show that one can recover a band-limited function by Gaussian or hyper-Gaussian regularized nonuniform sampling series with an exponential convergence rate. Our analysis is based on the residue theorem in complex analysis, which is used to represent the truncated error by a contour integral. Several concrete examples of nonuniform sampling with exponential convergence will be presented.

\smallskip
\noindent\textbf{Keywords:} Band-limited functions, Nonuniform sampling, Regularization, Approximation bounds

\noindent \textbf{MSC codes:} 41A25, 30E10, 94A20
\end{abstract}

\section{Introduction}\label{sec: introduction}

The classical Whittaker-Kotelnikov-Shannon sampling theorem \cite{shannon} plays an important role in signal processing, because it establishes the fundamental relation between a band-limited signal and its samples. It states that, for any function $f\in L^2(\bR) \cap C(\bR)$ which is band-limited to $[-\pi,\pi]$ in the sense that the support of its Fourier transform
\[
\widehat{f}(\xi) = \frac{1}{\sqrt{2\pi}} \int_\bR f(x) e^{-ix \xi} dx
\]
is contained in $[-\pi,\pi]$, we can reconstruct $f$ from its infinite sampling data $\{f(n):n\in \bZ \}$ by the cardinal series:
\begin{equation} \label{cardinal series}
f(x)=\sum_{n=-\infty}^{\infty}f(n)\sinc(x-n)=\sum_{n=-\infty}^{\infty} f(n)\frac{\sin \pi(x-n)}{\pi (x-n)},
\end{equation}
where the series converges absolutely and uniformly on $\bR$. For practical reason, we can only sum up finite samples near the point $x$ to approximate $f(x)$. Thus, one has to consider the truncation of the cardinal series. However, due to the slow decayness of the sinc function, truncating the cardinal series leads to a convergence rate of order $O(N^{-1/2})$. Furthermore, this truncated series is an optimal algorithm for recovering functions band-limited to $[-\pi,\pi]$ in the worst case scenario \cite{optimal}.

In order to achieve a fast convergence rate, one may consider functions band-limited to $[-\sigma,\sigma]$ with $\sigma<\pi$, so that the sampling rate of the uniform samples $\{f(n):n\in \bZ \}$ is strictly larger than the Nyquist sampling rate. In this case, it has been shown that regularized Whittaker-Kotelnikov-Shannon sampling series can achieve exponential convergence rates \cite{hyperGaussian,ChenZhang,Lin,Qian1,Qian2,sinc}. Specifically, these papers considered the convergence rate of the regularized sampling series of the form
\[
\sum_{n=-N}^{N}f(n)\sinc(x-n)G_N(x-n).
\]
When the regularizer $G_N$ is a Gaussian function
\[
G_N(x):=\exp\biggl(-\frac {x^2}{2\delta_{\sigma,N}^2}\biggr),\quad x\in\bR,
\]
with $\delta_{\sigma,N}>0$ depending on both $\sigma$ and $N$, Qian \cite{Qian1,Qian2} first used the Fourier analysis method to establish the convergence rate
\[
O \left(N^{1/2}\exp\left(-\tfrac{\pi-\sigma}{2}N\right)\right).
\]
This rate was improved to
\[
O \left(N^{-1/2}\exp\left(-\tfrac{\pi-\sigma}{2}N\right)\right)
\]
in \cite{Lin,optimal} and \cite{sinc} by Fourier methods and complex analysis methods, respectively. Recently, it was shown in \cite{hyperGaussian} that exponential approximation can also be achieved by using a hyper-Gaussian regularizer
\[
G_N(x)=\exp\biggl(-\frac {x^{2m}}{2\delta_{\sigma,N,m}^2}\biggr),\quad x\in \bR,
\]
where $m\in\bN$ and $\delta_{\sigma,N,m}$ depends on $\sigma,N$ and $m$.

Note that all these results are concerned about the regular and uniform sampling points $\Lambda=\bZ$. Uniform sampling is impractical in real applications. Whether exponential convergence can also be achieved for the reconstruction of a band-limited function from its nonuniform sampling remained an open question. The goal of this paper is to study this question for nonuniform sampling sequences.

Sampling theorems on irregular or nonuniform sampling points have been extensively studied in the literature. For example, Higgins \cite{Higgins1,Higgins2} and Hinsen \cite{Hinsen1,Hinsen} proved that a Shannon type sampling formula still hold true for irregular and nonuniform sampling points $\{\lambda_n\}$ satisfying
\[
|\lambda_n-n|\le L,\ \ n\in\bZ
\]
for certain positive constants $L$. Margolis and Eldar \cite{Margolis} considered nonuniform sampling for periodic band-limited functions. The work of Annaby and Asharabi \cite{Annaby} gave upper bound estimates for truncated Shannon's formula for a class of nonuniform sampling points and band-limited functions with fast decay. However, so far, it remains unknown whether exponential convergence can also be achieved in approximating a general band-limited function from its finite samplings on nonuniform points. In this paper, we give affirmative answer to this problem. More precisely, we consider a general sampling sequence $\Lambda=\{\lambda_n \}_{n\in \bZ}$ and use the complex analysis method to analyze the convergence rate of the regularized sampling series
\[
\cG_Nf(z) := \sum_{n=-N}^{N}f(\lambda_n)\varphi_{\Lambda,n}(z)G_N(z-\lambda_n),
\]
where $\varphi_{\Lambda,n}$ is a canonical product to be defined later and $G_N$ is a Gaussian or hyper-Gaussian function. Through careful analysis, we will show that several kinds of regularized nonuniform sampling series $\cG_Nf$ can achieve exponential convergence.

The rest of the paper is organized as follows. We formulate the approximation error $f(z)-\cG_Nf(z)$ as a contour integral in Section \ref{sec: series}. Error estimates for Gaussian and hyper-Gaussian regularizers $G_N$ are given in Sections \ref{sec: gaussian} and \ref{sec: hypergaussian}, respectively. Finally, we present in Section \ref{examples} several examples of nonuniform sampling sequences such that exponential approximation is achieved.

\section{Regularized nonuniform sampling series}\label{sec: series}

Let us begin with an introduction to the spaces of band-limited functions. For any $\sigma>0$, we denote by $B_\sigma$ the set of all entire functions of exponential type at most $\sigma$. In other words, $\cB_\sigma$ consists of functions $f$ that are analytic in the whole complex plane and satisfy
\[
\limsup_{r\to \infty} \frac{1}{r} \log \left(\max_{|z|=r}|f(z)| \right)\le \sigma.
\]
For $1\le p\le \infty$, the Bernstein space $B_\sigma^p$ is the set of all $f\in B_\sigma$ whose restrictions to the real axis belong to $L^p(\bR)$. The norm of $f\in B_\sigma^p$ is defined to be the $L^p$-norm of its restriction on $\bR$. Functions in the Bernstein spaces $B_\sigma^p$ are band-limited in the sense that they have a Fourier transform with compact support in $[-\sigma,\sigma]$ by the Paley-Wiener theorem \cite{rudin}. Furthermore, the Plancherel-P\'olya theorem implies that
\[
B_\sigma^p \subseteq B_\sigma^q \subseteq B_\sigma^\infty \subseteq C(\bR)
\]
for all $1\le p\le q\le \infty$. It is well-known that a function $f\in L^2(\bR) \cap C(\bR)$ is band-limited to $[-\sigma,\sigma]$ if and only if $f$ is the restriction to $\bR$ of a $B_\sigma^2$ function.

We consider the problem of recovering a band-limited function from its samples on a sampling sequence $\Lambda$. For simplicity, throughout this paper, we will assume that $\Lambda :=\{\lambda_n \}_{n\in \bZ}\subseteq \bR $ satisfies
\[
\lambda_0=0, \quad \lambda_n<\lambda_{n+1},\quad n\in \bZ,
\]
and $\Lambda$ is separated in the sense that
\begin{equation}\label{lower}
\delta_\Lambda:= \inf_{n\in\bZ}|\lambda_{n+1}-\lambda_n|>0.
\end{equation}
By the Weierstrass factorization theorem \cite{entire}, there exist entire functions $\varphi_\Lambda$ whose zeros are exactly those points in $\Lambda$. Since
\[
\sum_{n\neq 0} \frac{1}{|\lambda_n|^2}\le \sum_{n\neq 0} \frac{1}{\delta_\Lambda^2 n^2} <\infty,
\]
one of these entire functions is the canonical product
\[
\varphi_\Lambda(z)=z\prod_{n\neq 0}\left(1-\frac{z}{\lambda_n}\right) e^{z/\lambda_n},\quad z\in\bC.
\]
If we further have $\lim_{r\to \infty} \sum_{0<|\lambda_n|<r} \frac{1}{\lambda_n}<\infty$, then
\begin{equation}\label{generating function}
\varphi_\Lambda(z)=z\lim_{r\to \infty} \prod_{0<|\lambda_n|<r}\left(1-\frac{z}{\lambda_n}\right),\quad z\in\bC
\end{equation}
is another simpler choice. Such an entire function $\varphi_\Lambda$ whose zeros are exactly those points in $\Lambda$ is called a generating function of $\Lambda$.

We now turn to the sampling series to be used for reconstruction of a function $f\in\cB_\sigma$ from its sampling at $\Lambda=\{\lambda_n \}_{n\in \bZ}$. First choose a generating function $\varphi_\Lambda$ of $\Lambda$. Since $\varphi_\Lambda(z)$ is an entire function with zeros exactly at the points $\lambda_n$, $n\in \bZ$, we can define for every $n\in \bZ$
\begin{equation}\label{interpolation function}
\varphi_{\Lambda,n}(z)=\frac{\varphi_\Lambda(z)}{\varphi'_\Lambda(\lambda_n)(z-\lambda_n)}.
\end{equation}
These are entire functions which solve the interpolation problem
\[
\varphi_{\Lambda,n}(\lambda_k)=
\begin{cases}
1 \quad &k=n, \\
0 \quad &k\neq n.
\end{cases}
\]
Following the idea of classical Lagrange interpolation, we define
\begin{equation}\label{series}
(\cA_Nf)(z):=\sum_{n=-N}^{N}f(\lambda_n)\varphi_{\Lambda,n}(z).
\end{equation}
Together with the idea of regularization, we shall study the regularized sampling series
\begin{equation}\label{series with entire}
(\cG_N f)(z) :=\sum_{n=-N}^{N}f(\lambda_n)\varphi_{\Lambda,n}(z)G_N(z-\lambda_n),\quad f\in B_\sigma^\infty,
\end{equation}
where $\sigma<\pi$ and $G_N(z)$ is an entire function with $G_N(0)=1$ that will serve as a regularizer. Note that if $\Lambda=\bZ$, then $\varphi_\Lambda(z)=\sin\pi z$, $\varphi_{\Lambda,n}(z)=\sinc(z-n)$ and the series (\ref{series}) is the cardinal series (\ref{cardinal series}). This series with $G_N$ being a Gaussian function and $\Lambda=\bZ$ was used in \cite{sinc} to approximate a band-limited function $f\in B_\sigma^2$ with $\sigma<\pi$.

Next, we use the complex analysis method to estimate the error $f(z)-(\cG_Nf)(z)$. The key idea is that we can represent the error by a contour integral. More specifically, let $\cL_N$ be the positively oriented rectangle with vertices at $T_N^\pm+iS_N^\pm$, where
\[
\lambda_N<T_N^+<\lambda_{N+1},\quad \lambda_{-N-1}<T_N^-<\lambda_{-N},\quad S_N^+>0,\quad S_N^-<0.
\]
Then by the residue theorem, for $z=x+iy\notin \Lambda$ with $T_N^-<x<T_N^+$ and $S_N^-<y<S_N^+$, we can write the error as
\begin{equation}\label{residue}
f(z)-(\cG_Nf)(z)=\frac{\varphi_\Lambda(z)}{2\pi i}\int_{\cL_N}\frac{f(\zeta)G_N(z-\zeta)}{\varphi_\Lambda(\zeta)(\zeta-z)} d\zeta,
\end{equation}
since $G_N(z)$ is an entire function with $G_N(0)=1$. Now, denote by $I_{hor}^\pm$ the contributions to the last integral coming from the two horizontal parts of $\cL_N$, where $+$ and $-$ refer to the upper and the lower line segment, respectively. Similarly, denote by $I_{ver}^\pm$ the contributions coming from the two vertical parts of $\cL_N$, where $+$ and $-$ refer to the right and the left line segment, respectively. Then
\begin{equation}\label{error}
f(z)-(\cG_Nf)(z)=\frac{\varphi_\Lambda(z)}{2\pi i} (I_{hor}^+(z)+I_{hor}^-(z)+I_{ver}^+(z)+I_{ver}^-(z)).
\end{equation}
We now have the following initial estimate of the error $f(z)-\cG_Nf(z)$.

\begin{lemma}\label{iniitalestimate}
Let $f\in B_\sigma^\infty$, $\varphi_\Lambda$ be a generating function of the sampling sequence $\Lambda$, and $G_N$ be an entire function satisfying $G_N(0)=1$. Then for all $z=x+iy\notin \Lambda$ with $T_N^-<x<T_N^+$ and $S_N^-<y<S_N^+$,
\begin{equation}\label{errorbound1}
|f(z)-\cG_Nf(z)|\le \frac{|\varphi_\Lambda(z)|}{2\pi }(|I_{hor}^+(z)|+|I_{hor}^-(z)|+|I_{ver}^+(z)|+|I_{ver}^-(z)|),
\end{equation}
where
\begin{equation}\label{errorbound2}
\begin{aligned}
|I_{hor}^\pm(z)|\le \frac{\|f\|_\infty}{|S_N^\pm-y|}\frac{e^{-(\pi-\sigma)|S_N^\pm|}}{\displaystyle{\min_{\zeta\in \cL_N}\phi_\Lambda(\zeta)}} \int_{T_N^-}^{T_N^+} |G_N(x-t+iy-iS_N^\pm)| dt,
\end{aligned}
\end{equation}
\begin{equation}\label{errorbound3}
\begin{aligned}
|I_{ver}^\pm(z)|\le \frac{\|f\|_\infty}{|T_N^\pm-x|}\frac1{\displaystyle{\min_{\zeta\in \cL_N}\phi_\Lambda(\zeta)}} \int_{S_N^-}^{S_N^+} e^{-(\pi-\sigma)|s|} |G_N(x-T_N^\pm+iy-is)| ds,
\end{aligned}
\end{equation}
with $\|f\|_\infty$ being the $L^\infty$-norm of $f$ on $\bR$ and
\begin{equation}\label{phi Lambda}
\phi_\Lambda(z)=|\varphi_\Lambda(z)|e^{-\pi |\Im z|},\ \ z\in\bC.
\end{equation}
\end{lemma}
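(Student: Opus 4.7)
The plan is to take absolute values inside the contour integral representation (\ref{residue}) to obtain (\ref{errorbound1}), and then to bound each of the four pieces $I_{hor}^\pm(z)$ and $I_{ver}^\pm(z)$ separately by parametrizing the corresponding side of $\cL_N$. The rectangular contour is designed precisely so that on each horizontal side $|\Im \zeta|$ is the constant $|S_N^\pm|$, while on each vertical side $\Re \zeta$ is the constant $T_N^\pm$; this matches the shape of the target bounds (\ref{errorbound2}) and (\ref{errorbound3}).

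The pointwise estimate on the integrand $f(\zeta) G_N(z-\zeta) / [\varphi_\Lambda(\zeta)(\zeta-z)]$ rests on three inputs. First, because $f\in B_\sigma^\infty$, the Plancherel--P\'olya theorem gives $|f(\zeta)|\le \|f\|_\infty e^{\sigma|\Im\zeta|}$ for every $\zeta\in\bC$. Second, the definition $\phi_\Lambda(\zeta) = |\varphi_\Lambda(\zeta)|e^{-\pi|\Im\zeta|}$ rewrites as $|\varphi_\Lambda(\zeta)|\ge e^{\pi|\Im\zeta|}\min_{\cL_N}\phi_\Lambda$ on the contour. Dividing the first bound by the second produces exactly the factor $e^{-(\pi-\sigma)|\Im\zeta|}/\min_{\cL_N}\phi_\Lambda$ appearing in both (\ref{errorbound2}) and (\ref{errorbound3}). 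Third, $|\zeta-z|$ is bounded below by the Euclidean distance from $z$ to the line carrying the relevant side: this equals $|S_N^\pm - y|$ on the horizontal parts and $|T_N^\pm - x|$ on the vertical parts, each strictly positive by the assumption $T_N^- < x < T_N^+$, $S_N^- < y < S_N^+$.

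With these ingredients, the four estimates are routine. On the upper horizontal part, set $\zeta = t + iS_N^+$ with $t\in[T_N^-,T_N^+]$, so $|\Im\zeta|=|S_N^+|$ is constant and $|G_N(z-\zeta)| = |G_N(x-t + i(y-S_N^+))|$; the factor $e^{-(\pi-\sigma)|S_N^+|}/\min\phi_\Lambda$ and the lower bound $|S_N^+-y|$ come out of the integral, leaving exactly (\ref{errorbound2}). The lower horizontal piece is identical modulo signs. For the right vertical part, parametrize $\zeta = T_N^+ + is$ with $s\in[S_N^-,S_N^+]$; now $|\Im\zeta|=|s|$ varies, so the factor $e^{-(\pi-\sigma)|s|}$ must remain under the $s$-integral, while $1/|T_N^+-x|$ and $1/\min\phi_\Lambda$ pull out, yielding (\ref{errorbound3}). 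The left vertical piece is analogous.

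The only non-elementary input is the Plancherel--P\'olya bound $|f(\zeta)|\le \|f\|_\infty e^{\sigma|\Im\zeta|}$ for $f\in B_\sigma^\infty$, which is standard for Bernstein spaces; everything else is parametrization of the sides of $\cL_N$, trivial lower bounds on $|\zeta-z|$, and pulling $\zeta$-independent factors out of the integral. I therefore do not anticipate any real obstacle: the lemma is a packaging step whose purpose is to isolate the dependence of the error on $G_N$ and on the geometry of $\cL_N$, leaving the choice of $T_N^\pm$, $S_N^\pm$ and $G_N$ to be optimized in the Gaussian and hyper-Gaussian analyses of Sections \ref{sec: gaussian} and \ref{sec: hypergaussian}.
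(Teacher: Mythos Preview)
Your proposal is correct and follows essentially the same route as the paper: triangle inequality on the contour integral (\ref{residue}), the growth bound $|f(\zeta)|\le \|f\|_\infty e^{\sigma|\Im\zeta|}$, the lower bound $|\varphi_\Lambda(\zeta)|\ge e^{\pi|\Im\zeta|}\min_{\cL_N}\phi_\Lambda$, and the distance estimates $|\zeta-z|\ge|S_N^\pm-y|$ or $|T_N^\pm-x|$ on the respective sides. The only cosmetic difference is that the paper attributes the growth bound to the Phragm\'en--Lindel\"of principle rather than the Plancherel--P\'olya theorem, which is the more precise name for this $L^\infty$ inequality.
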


\begin{proof} By the Phragm\'en-Lindel\"of principle \cite{entire},
\[
|f(z)|\le \|f\|_\infty e^{\sigma |\Im z|},\ z\in\bC,\ f\in B^\infty_\sigma.
\]
Note also that
\[
\phi_\Lambda(z)=|\varphi_\Lambda(z)|e^{-\pi|\Im z|}>0\mbox{ for }z\in\cL_N.
\]
We now estimate $I_{hor}^\pm$ and $I_{ver}^\pm$ as follows:
\[
\begin{aligned}
|I_{hor}^\pm(x+iy)|&=\left|\int_{T_N^-}^{T_N^+}\frac{f(t+iS_N^\pm) G_N(x+iy-t-iS_N^\pm)}{\varphi_\Lambda(t+iS_N^\pm) (t+iS_N^\pm-x-iy)} dt \right| \\
&\le \int_{T_N^-}^{T_N^+}\frac{\|f\|_\infty e^{\sigma |S_N^\pm|} |G_N(x-t+iy-iS_N^\pm)|}{\displaystyle{(\min_{\zeta\in \cL_N}\phi_\Lambda(\zeta))} e^{\pi |S_N^\pm|} |t-x+iS_N^\pm-iy|} dt \\
&\le \frac{\|f\|_\infty}{|S_N^\pm-y|}\frac{e^{-(\pi-\sigma)|S_N^\pm|}}{\displaystyle{\min_{\zeta\in \cL_N}\phi_\Lambda(\zeta)}} \int_{T_N^-}^{T_N^+} |G_N(x-t+iy-iS_N^\pm)| dt,
\end{aligned}
\]
\[
\begin{aligned}
|I_{ver}^\pm(x+iy)|&=\left|\int_{S_N^-}^{S_N^+}\frac{f(T_N^\pm+is) G_N(x+iy-T_N^\pm-is)}{\varphi_\Lambda(T_N^\pm+is) (T_N^\pm+is-x-iy)} ds \right| \\
&\le \int_{S_N^-}^{S_N^+}\frac{\|f\|_\infty e^{\sigma |s|} |G_N(x-T_N^\pm+iy-is)|}{\displaystyle{(\min_{\zeta\in \cL_N}\phi_\Lambda(\zeta))}e^{\pi |s|} |T_N^\pm-x+is-iy|} ds \\
&\le \frac{\|f\|_\infty}{\displaystyle{(\min_{\zeta\in \cL_N}\phi_\Lambda(\zeta))}|T_N^\pm-x|} \int_{S_N^-}^{S_N^+} e^{-(\pi-\sigma)|s|} |G_N(x-T_N^\pm+iy-is)| ds.
\end{aligned}
\]
The above two equations prove (\ref{errorbound2}) and (\ref{errorbound3}).
\end{proof}

We shall give further upper bound estimates of the error by estimating the generating function $\varphi_\Lambda$ and specifying the regularization function $G_N$.

\section{Gaussian regularization}\label{sec: gaussian}

When the regularizer $G_N(z)$ is a Gaussian function, we can give explicit upper bounds for $|I_{hor}^\pm(z)|$ and $|I_{ver}^\pm(z)|$ so that we can estimate the approximation error $f(z)-(\cG_Nf)(z)$.

\begin{theorem}\label{Gaussian regularization theorem}
Suppose $\Lambda=\{\lambda_n \}_{n\in \bZ} $ is separated, and for every $\lambda_N<T_N^+<\lambda_{N+1}, \lambda_{-N-1}<T_N^-<\lambda_{-N}$, denote $N_*= \min\{\lambda_{-1}-T_N^-,T_N^+-\lambda_1 \}$. Define the series $(\cG_Nf)(z)$ by (\ref{series with entire}), where
\[
G_N(z)=\exp(-r^2z^2) \mbox{ with }r^2=\frac{\pi-\sigma}{2N_*}.
\]
Then, for $N>1$ and $z=x+iy$ satisfying $\lambda_{-1}<x<\lambda_1$ and $|y|<N_*$, it holds for all $f\in B^\infty_\sigma$ with $0<\sigma <\pi$ that
\begin{equation}\label{errorgaussian}
|f(z)-(\cG_Nf)(z)|\le C_N(y) \frac{\|f\|_\infty |\varphi_\Lambda(z)|}{\pi \displaystyle{\min_{\zeta\in \cL_N}\phi_\Lambda(\zeta)}} e^{-\frac{\pi-\sigma}{2}N_*},
\end{equation}
where $\cL_N$ is the rectangle with vertices at $T_N^\pm+i(y\pm N_*)$ and
\[
C_N(y)= \sqrt{\frac{2\pi}{(\pi-\sigma)N_*}} \cosh((\pi-\sigma)y) + \frac{4}{(\pi-\sigma)N_*} \frac{e^{(\pi-\sigma)y^2/(2N_*)}}{1-(y/N_*)^2} .
\]
\end{theorem}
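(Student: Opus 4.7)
The plan is to apply Lemma \ref{iniitalestimate} with the rectangle $\cL_N$ whose vertices are $T_N^\pm + i(y \pm N_*)$, so that $S_N^\pm = y \pm N_*$ and $|S_N^\pm - y| = N_*$. Throughout I will repeatedly use the two identities
\[
|G_N(u+iv)| = e^{-r^2(u^2-v^2)}, \qquad r^2 N_*^2 = \tfrac{\pi-\sigma}{2}\,N_*,
\]
for the Gaussian $G_N(\zeta)=e^{-r^2\zeta^2}$; all of the exponential decay in the theorem comes from balancing these two ingredients against the prefactors in (\ref{errorbound2})--(\ref{errorbound3}).

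For the horizontal pieces the imaginary displacement in the argument of $G_N$ is exactly $\pm N_*$, so $|G_N(x-t+i(y-S_N^\pm))| = e^{r^2 N_*^2}\,e^{-r^2(x-t)^2}$. Extending the $t$-integral to all of $\bR$ evaluates the resulting Gaussian to $\sqrt{\pi}/r = \sqrt{2\pi N_*/(\pi-\sigma)}$, and pairing the exponentials $e^{-(\pi-\sigma)(N_*\pm y)}$ combines them into $2 e^{-(\pi-\sigma)N_*}\cosh((\pi-\sigma)y)$. After multiplying by $e^{r^2 N_*^2} = e^{(\pi-\sigma)N_*/2}$, this yields the $\cosh$-term of $C_N(y)$ times $e^{-(\pi-\sigma)N_*/2}$; the factor $2$ from summing $\pm$ will later be absorbed when the outer $1/(2\pi)$ in (\ref{errorbound1}) becomes the $1/\pi$ of (\ref{errorgaussian}).

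The main obstacle is the vertical piece, where the singular factor $1/(1-(y/N_*)^2)$ and the $1/((\pi-\sigma)N_*)$ scale must both emerge. Since $|T_N^\pm - x| \ge N_*$ for $x \in (\lambda_{-1},\lambda_1)$, I can extract $e^{-r^2(x-T_N^\pm)^2} \le e^{-(\pi-\sigma)N_*/2}$ at once, reducing matters (after $u=y-s$) to
\[
J(y) = \int_{-N_*}^{N_*} e^{-(\pi-\sigma)|y-u| + r^2 u^2}\, du.
\]
Splitting at $u=y$ and completing the square yields exponent $r^2(u+N_*)^2 - r^2 N_*^2 - (\pi-\sigma)y$ on $[-N_*, y]$ and $r^2(u-N_*)^2 - r^2 N_*^2 + (\pi-\sigma)y$ on $[y, N_*]$; the shifts $w = u+N_*$ or $w = N_*-u$ reduce both to $\int_0^{N_*\pm y} e^{r^2 w^2}\, dw$ with common prefactor $e^{r^2 y^2}$ (the linear-in-$y$ pieces cancel exactly, producing the exponent $(\pi-\sigma)y^2/(2N_*)$ in $C_N(y)$). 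The delicate step is then the sharp bound
\[
\int_0^a e^{r^2 w^2}\, dw = e^{r^2 a^2}\!\int_0^a e^{-r^2(a+w)(a-w)}\, dw \le e^{r^2 a^2}\!\int_0^a e^{-r^2 a(a-w)}\, dw \le \frac{e^{r^2 a^2}}{r^2 a},
\]
obtained from $(a+w)(a-w)\ge a(a-w)$. The naive bound $a\,e^{r^2 a^2}$ would sum to $2N_*\,e^{r^2 y^2}$ and would miss both the $1/((\pi-\sigma)N_*)$ scale and the $1/(1-(y/N_*)^2)$ singularity. Applied with $a = N_* \pm y$ and summed, the sharp estimate produces $J(y) \le 4\,e^{r^2 y^2}/[(\pi-\sigma)(1-(y/N_*)^2)]$; dividing by the $N_*$ from $|T_N^\pm-x|^{-1}$ recovers the second term of $C_N(y)$, and substituting both estimates into (\ref{errorbound1}) together with the factor $|\varphi_\Lambda(z)|/(2\pi)$ yields (\ref{errorgaussian}).
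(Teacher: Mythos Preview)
Your argument is correct and tracks the paper's proof closely: you invoke Lemma~\ref{iniitalestimate} with $S_N^\pm=y\pm N_*$, handle the horizontal pieces identically, and reduce the vertical pieces to the same integral $\int_{-N_*}^{N_*}e^{r^2u^2-(\pi-\sigma)|u\pm y|}\,du$. The only difference lies in how this last integral is bounded. The paper observes that on each of the two subintervals the exponent is convex and therefore lies below the secant line through its endpoint values, then integrates the resulting linear exponent directly. You instead complete the square and apply the inequality $\int_0^a e^{r^2w^2}\,dw\le e^{r^2a^2}/(r^2a)$, which is itself just the chord bound $w^2\le aw$ on $[0,a]$ in disguise; so the two devices are really the same convexity trick in different coordinates and yield the identical constant. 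One small expository slip: the ``common prefactor $e^{r^2y^2}$'' emerges only \emph{after} you apply the integral bound, since the linear terms $e^{\mp(\pi-\sigma)y}$ cancel against the cross terms in $e^{r^2(N_*\pm y)^2}$, not immediately after the shift as your sentence suggests.
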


\begin{proof}
By Lemma \ref{iniitalestimate}, we only need to bound $|I_{hor}^\pm(z)|$ and $|I_{ver}^\pm(z)|$. For every $z=x+iy$ satisfying the condition of the theorem, we choose $S_N^\pm=y\pm N_*$. Then $|S_N^\pm|=N_*\pm y$, $|S_N^\pm-y|=N_*$. By inequality (\ref{errorbound2}),
\[
\begin{aligned}
|I_{hor}^\pm(z)| &\le \frac{\|f\|_\infty}{\displaystyle{(\min_{\zeta\in \cL_N}\phi_\Lambda(\zeta))}|S_N^\pm-y|}e^{-(\pi-\sigma)|S_N^\pm|} \int_{T_N^-}^{T_N^+} |G_N(x-t+iy-iS_N^\pm)| dt \\
&\le \frac{\|f\|_\infty e^{\mp(\pi-\sigma)y}}{\displaystyle{(\min_{\zeta\in \cL_N}\phi_\Lambda(\zeta))}N_*} e^{r^2 N_*^2-(\pi-\sigma)N_*} \int_{-\infty}^{\infty} e^{-r^2(x-t)^2} dt \\
&= \frac{\sqrt{\pi}\|f\|_\infty e^{\mp(\pi-\sigma)y}}{r\displaystyle{(\min_{\zeta\in \cL_N}\phi_\Lambda(\zeta))}N_*} e^{r^2N_*^2-(\pi-\sigma)N_*} \\
&= \sqrt{\frac{2\pi}{(\pi-\sigma)N_*}} \frac{\|f\|_\infty}{\displaystyle{\min_{\zeta\in \cL_N}\phi_\Lambda(\zeta)}}e^{\mp(\pi-\sigma)y}  e^{-\frac{\pi-\sigma}{2}N_*}.
\end{aligned}
\]
By the definition of $N_*$, we have $|T_N^\pm -x|\ge N_*$. By inequality (\ref{errorbound3}),
\[
\begin{aligned}
|I_{ver}^\pm(z)| &\le \frac{\|f\|_\infty}{\displaystyle{(\min_{\zeta\in \cL_N}\phi_\Lambda(\zeta))}|T_N^\pm-x|} \int_{S_N^-}^{S_N^+} e^{-(\pi-\sigma)|s|} |G_N(x-T_N^\pm+iy-is)| ds \\
&\le \frac{\|f\|_\infty}{\displaystyle{(\min_{\zeta\in \cL_N}\phi_\Lambda(\zeta))}N_*} e^{-r^2N_*^2} \int_{y-N_*}^{y+N_*} e^{r^2(y-s)^2-(\pi-\sigma)|s|} ds \\
&= \frac{\|f\|_\infty}{\displaystyle{(\min_{\zeta\in \cL_N}\phi_\Lambda(\zeta))}N_*} e^{-\frac{\pi-\sigma}{2}N_*} \int_{-N_*}^{N_*} e^{r^2s^2-(\pi-\sigma)|s+y|} ds.
\end{aligned}
\]
To estimate the last integral, we use the convexity of parabolas to get
\[
r^2s^2-(\pi-\sigma)|s+y|\le
\begin{cases}
\frac{\pi-\sigma}{2}[y+(1-y/N_*)s],\quad & -N_*\le s\le -y, \\
-\frac{\pi-\sigma}{2}[y+(1+y/N_*)s],\quad & -y\le s \le N_*.
\end{cases}
\]
Consequently, we obtain
\[
\int_{-N_*}^{-y} e^{r^2s^2-(\pi-\sigma)|s+y|} ds\le \frac{2e^{(\pi-\sigma)y^2/(2N_*)}}{(\pi-\sigma)(1-y/N_*)}
\]
and
\[
\int_{-y}^{N_*} e^{r^2s^2-(\pi-\sigma)|s+y|} ds\le \frac{2e^{(\pi-\sigma)y^2/(2N_*)}}{(\pi-\sigma)(1+y/N_*)}.
\]
Therefore,
\[
|I_{ver}^\pm(z)| \le \frac{4\|f\|_\infty}{(\pi-\sigma)\displaystyle{(\min_{\zeta\in \cL_N}\phi_\Lambda(\zeta))}N_*} \frac{e^{(\pi-\sigma)y^2/(2N_*)}}{1-(y/N_*)^2} e^{-\frac{\pi-\sigma}{2}N_*}.
\]
Combining these inequalities and using equality (\ref{error}), we get the desired bound (\ref{errorgaussian}).
\end{proof}


In order to get a convergence rate of the regularized nonuniform sampling series, $\Lambda$ needs to be ``dense enough'' so that it is an oversampling for $B^\infty_\sigma$. In view of Theorem \ref{Gaussian regularization theorem}, it is natural to pose conditions on the function $\varphi_\Lambda$.

\begin{corollary}\label{Gaussian regularization corollary}
Assume the conditions of Theorem \ref{Gaussian regularization theorem} and let $T^+_N=\frac{\lambda_N+\lambda_{N+1}}{2}$, $T^-_N=\frac{\lambda_{-N}+\lambda_{-N-1}}{2}$. If there exists $0<\delta<\delta_\Lambda/2$ such that
\begin{equation}\label{lower estimate of generating function}
|\varphi_\Lambda(z)| \ge C|z|^{-p}e^{\pi |\Im z|},\quad \mbox{whenever}\ \dist(z,\Lambda):= \inf_{n\in\bZ}|z-\lambda_n|>\delta,
\end{equation}
for some constants $C>0$ and $p\ge 0$. Then, for any $f\in B^\infty_\sigma$ with $\sigma<\pi$ and $\lambda_{-1}<x<\lambda_1$,
\[
|f(x)-(\cG_Nf)(x)| \le \left( \sqrt{\frac{2\pi}{\pi-\sigma}} + \frac{4}{(\pi-\sigma)\sqrt{N_*}} \right) \frac{\|f\|_\infty |\varphi_\Lambda(x)| \widetilde{N}^p}{C \pi \sqrt{N_*} } e^{-\frac{\pi-\sigma}{2}N_*},
\]
where $\widetilde{N}=\sqrt{\max\{|T_N^+|^2, |T_N^-|^2 \} +N_*^2 }$.
\end{corollary}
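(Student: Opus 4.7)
The plan is to derive Corollary \ref{Gaussian regularization corollary} as a direct specialization of Theorem \ref{Gaussian regularization theorem}. First I would set $z=x$ real, i.e.\ $y=0$, which simplifies $\cosh((\pi-\sigma)y)=1$, $e^{(\pi-\sigma)y^2/(2N_*)}=1$, and $1-(y/N_*)^2=1$, so that the constant $C_N(y)$ collapses to $C_N(0)=\sqrt{2\pi/((\pi-\sigma)N_*)}+4/((\pi-\sigma)N_*)$. Factoring out $1/\sqrt{N_*}$ from this expression produces exactly the prefactor $\sqrt{2\pi/(\pi-\sigma)}+4/((\pi-\sigma)\sqrt{N_*})$ appearing in the corollary.

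The remaining task is to convert the factor $1/\min_{\zeta\in\cL_N}\phi_\Lambda(\zeta)$ coming from Theorem \ref{Gaussian regularization theorem} into the explicit quantity $\widetilde N^p/C$. The key observation is that hypothesis \eqref{lower estimate of generating function} applies to every point of the rectangle $\cL_N$, once we verify $\dist(\zeta,\Lambda)>\delta$ uniformly on $\cL_N$. For the two vertical sides, the real part is $T_N^\pm$, the midpoint of consecutive sampling points, so $\dist(T_N^\pm,\Lambda)=(\lambda_{N+1}-\lambda_N)/2\ge \delta_\Lambda/2>\delta$ by the separation assumption \eqref{lower} and the choice $\delta<\delta_\Lambda/2$; translating vertically does not decrease this distance since $\Lambda\subseteq\bR$. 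For the two horizontal sides, every point has $|\Im\zeta|=N_*$, whence $\dist(\zeta,\Lambda)\ge N_*$; and $N_*>\delta$ holds for $N$ sufficiently large because $N_*\to\infty$ as $N\to\infty$ (a consequence of the separation property \eqref{lower}). Thus \eqref{lower estimate of generating function} yields $\phi_\Lambda(\zeta)=|\varphi_\Lambda(\zeta)|e^{-\pi|\Im\zeta|}\ge C|\zeta|^{-p}$ uniformly on $\cL_N$. Since every $\zeta\in\cL_N$ satisfies $|\zeta|\le\widetilde N$ and $p\ge 0$, the monotonicity of $t\mapsto t^{-p}$ gives $\min_{\zeta\in\cL_N}\phi_\Lambda(\zeta)\ge C\widetilde N^{-p}$.

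Substituting the two ingredients above into the conclusion \eqref{errorgaussian} of Theorem \ref{Gaussian regularization theorem} immediately produces the stated bound, with the term $|\varphi_\Lambda(x)|$ simply carried through unchanged. The only delicate point is verifying the distance condition uniformly over the contour $\cL_N$; I expect this to be the main (though modest) obstacle, since the vertical sides must exploit the midpoint choice of $T_N^\pm$ while the horizontal sides rely on the fact that the height $N_*$ eventually exceeds $\delta$. No further analysis of $\varphi_\Lambda$ on $\bR$ is required.
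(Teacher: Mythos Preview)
Your proposal is correct and follows exactly the paper's approach: specialize Theorem~\ref{Gaussian regularization theorem} to $y=0$, then use hypothesis~\eqref{lower estimate of generating function} together with $|\zeta|\le\widetilde N$ on $\cL_N$ to obtain $\min_{\zeta\in\cL_N}\phi_\Lambda(\zeta)\ge C\widetilde N^{-p}$, and substitute. The paper's proof is a one-line ``direct consequence'' argument that omits the verification of $\dist(\zeta,\Lambda)>\delta$ on $\cL_N$, which you have correctly supplied (midpoint choice of $T_N^\pm$ handles the vertical sides, and $|\Im\zeta|=N_*$ handles the horizontal sides once $N_*>\delta$).
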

\begin{proof}
By hypothesis, $\phi_\Lambda(\zeta)\ge C|\widetilde{N}|^{-p}$ for every $\zeta\in \cL_N$. The result is a direct consequence of the bound (\ref{errorgaussian}).
\end{proof}

If we further know the growth of $|\lambda_N|$ and $|\lambda_{-N}|$, then we can estimate $N_*$ and $\widetilde{N}$ to have a more explicit estimate of $|f(x)-\cG_Nf(x)|$. We give some examples in Section \ref{examples}.

The condition (\ref{lower estimate of generating function}) can be seen as a requirement on the density of $\Lambda$. For instance, if $\Lambda=a\bZ$ for some $a>1$, then $\varphi_\Lambda(z)=\sin(\frac{\pi}{a}z)$ satisfying $|\varphi_\Lambda(z)|\le e^{\frac{\pi}{a}|\Im z|}$. Thus, it cannot satisfy condition (\ref{lower estimate of generating function}) for any $p$. Actually, in this case, $\Lambda$ is an under-sampling for $B^\infty_\sigma$ with $\pi>\sigma>\pi/a$, so there is no hope to reconstruct $f\in B^\infty_\sigma$ from its samples on $\Lambda$.

\section{Hyper-Gaussian regularization}\label{sec: hypergaussian}

In this section, we consider the nonuniform sampling series (\ref{series with entire}) with the hyper-Gaussian regularizer $G_N(z)=\exp(-r_m(N)z^{2m})$, where $m>1$ is an integer and $r_m(N)$ will be chosen later. Recently, it has been proved in \cite{hyperGaussian} by Fourier analysis methods that hyper-Gaussian regularized Whittaker-Kotel'nikov-Shannon sampling series are able to recover a band-limited function from its finite uniform oversampling data. We aim at achieving exponential convergence in reconstructing a band-limited function from its non-uniform sample data by complex analysis methods in this section.

We shall first use the Laplace's method \cite{Laplace method} to estimate $|I_{hor}^\pm|$.
\begin{lemma}[Laplace's method]\label{Laplace}
Let $f$ be a twice differentiable real-valued function on a finite interval $[a,b]$. Assume $c\in (a,b)$ is the only maximum point of $f$ on $[a,b]$, $f'(c)=0$, and $f''(c)<0$. Then
\[
\int_{a}^{b}e^{Nf(t)}dt=e^{Nf(c)}\left( \sqrt{\frac{2\pi}{-f''(c)N}}+o\left( \frac{1}{\sqrt{N}} \right) \right), \quad N\to \infty.
\]
\end{lemma}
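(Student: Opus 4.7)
The plan is to isolate the dominant contribution of the integrand near $t=c$, show the rest is exponentially small in $N$, and then reduce the local piece to a Gaussian integral via a scaling substitution. After factoring out $e^{Nf(c)}$, it suffices to prove
\[
J(N) := \int_{a}^{b} e^{N(f(t)-f(c))}\,dt = \sqrt{\frac{2\pi}{-f''(c)N}} + o\!\left(\frac{1}{\sqrt{N}}\right), \quad N\to\infty.
\]
Pick a small $\delta>0$ with $[c-\delta,c+\delta]\subset(a,b)$ and split $J(N)=J_{\text{in}}(N)+J_{\text{out}}(N)$ over $[c-\delta,c+\delta]$ and its complement in $[a,b]$.

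For the tail $J_{\text{out}}$, continuity of $f$ together with the uniqueness of the maximum at $c$ yields some $\eta=\eta(\delta)>0$ with $f(t)-f(c)\le-\eta$ on $[a,b]\setminus[c-\delta,c+\delta]$. Hence $|J_{\text{out}}(N)|\le(b-a)e^{-\eta N}$, which decays faster than any inverse power of $N$ and is absorbed into the $o(N^{-1/2})$ remainder.

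For the main piece $J_{\text{in}}$, set $A:=-f''(c)>0$. The Taylor formula (valid since $f$ is twice differentiable at $c$ with $f'(c)=0$) gives
\[
f(t)-f(c) = -\tfrac{A}{2}(t-c)^2 + R(t), \qquad R(t)=o((t-c)^2) \text{ as } t\to c.
\]
Shrink $\delta$ so that $|R(t)|\le\tfrac{A}{4}(t-c)^2$ on $[c-\delta,c+\delta]$. The substitution $u=(t-c)\sqrt{NA/2}$ then transforms
\[
J_{\text{in}}(N) = \sqrt{\tfrac{2}{NA}}\int_{-\delta\sqrt{NA/2}}^{\delta\sqrt{NA/2}} \exp\!\left(-u^2 - N\,R\!\left(c+u\sqrt{\tfrac{2}{NA}}\right)\right) du .
\]
The bound on $R$ yields $|N\,R(c+u\sqrt{2/(NA)})|\le \tfrac{1}{2}u^2$, so the integrand is dominated by $e^{-u^2/2}$, which is integrable over $\bR$. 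For each fixed $u$, writing $h=u\sqrt{2/(NA)}$ we have $N\,R(c+h)=\tfrac{2u^2}{A}\cdot R(c+h)/h^2\to 0$ as $N\to\infty$, so the integrand converges pointwise to $e^{-u^2}$, and the indicator of the expanding interval converges to $1$. By dominated convergence the integral tends to $\int_{\bR}e^{-u^2}\,du=\sqrt{\pi}$, hence
\[
J_{\text{in}}(N) = \sqrt{\tfrac{2\pi}{NA}} + \sqrt{\tfrac{2}{NA}}\cdot o(1) = \sqrt{\tfrac{2\pi}{NA}} + o(N^{-1/2}),
\]
which together with the tail estimate gives the claimed asymptotics.

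The main obstacle is the dominated-convergence step: one must choose $\delta$ carefully so that, after the $N$-dependent substitution, the nonlinear Taylor remainder in the exponent admits an integrable majorant independent of $N$ on the expanding interval. Once this uniform control is secured, the pointwise limit and the classical Gaussian evaluation are immediate, and the exponentially small tail is negligible.
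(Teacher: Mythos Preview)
The paper does not prove this lemma; it is quoted as a standard result with a reference to Pinsky's textbook, so there is no ``paper's own proof'' to compare against. Your argument is the classical proof of Laplace's method---split off an exponentially small tail, Taylor-expand at the maximum, rescale, and apply dominated convergence---and it is correct as written. One harmless slip: after the substitution the exponent should read $-u^2 + N\,R(\cdot)$ rather than $-u^2 - N\,R(\cdot)$, since you defined $R$ by $f(t)-f(c)=-\tfrac{A}{2}(t-c)^2+R(t)$; the majorant $e^{-u^2/2}$ and the pointwise limit are unaffected because you bound $|NR|\le u^2/2$ anyway.
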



\begin{lemma}\label{estimate lemma1}
Let $m>1$ be an integer and $h_m(t)=-\Re (t+i)^{2m}$. Then
\[
\int_{0}^{\infty}e^{Nh_m(t)}dt=e^{N (\sin\frac{\pi}{4m-2} )^{1-2m} } \left(\sqrt{\frac{\pi}{m(2m-1)}} \left(\sin\frac{\pi}{4m-2} \right)^{m-\frac{3}{2}} \frac{1}{\sqrt{N}} +o\left( \frac{1}{\sqrt{N}} \right) \right)
\]
as $N\to \infty$. Consequently, there exists a constant $A_m$ such that for all $N>0$,
\[
\int_{0}^{\infty} e^{Nh_m(t)}dt\le \frac{A_m}{\sqrt{N}}e^{N (\sin\frac{\pi}{4m-2} )^{1-2m} }.
\]
\end{lemma}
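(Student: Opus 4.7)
The integrand $e^{N h_m(t)}$ has a sharply peaked exponent, so the natural plan is to invoke Lemma~\ref{Laplace} once the dominant maximum of $h_m(t) = -\Re(t+i)^{2m}$ on $[0,\infty)$ is located. Passing to the polar form $t+i = r e^{i\phi}$, $\phi = \arctan(1/t) \in (0,\pi/2]$, one has $h_m(t) = -r^{2m}\cos(2m\phi)$. Since $h_m'(t) = -2m\,\Re (t+i)^{2m-1}$, the critical equation on $(0,\infty)$ reduces to $(2m-1)\phi = \pi/2 + k\pi$ for $k = 0, 1, \dots, m-1$. The candidate $k = 0$ fixes $\phi_0 := \pi/(4m-2)$ and the point $t_0 := \cot\phi_0$, at which $r_0 = 1/\sin\phi_0$. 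Using $2m\phi_0 = \pi/2 + \phi_0$, one obtains
\[
h_m(t_0) = -r_0^{2m}\cos(2m\phi_0) = \sin^{-2m}\phi_0 \cdot \sin\phi_0 = \sin^{1-2m}\phi_0,
\]
which is precisely the exponent claimed in the statement.

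Next I would verify that $t_0$ is the \emph{global} maximum on $[0,\infty)$ by comparing with (i) the endpoint values $h_m(0) = (-1)^{m+1}$ and $h_m(t) \to -\infty$ as $t \to \infty$, and (ii) the remaining critical points $t_k$ ($k \ge 1$), where the analogous computation gives $|h_m(t_k)| = \sin^{1-2m}((2k+1)\phi_0)$; these are strictly smaller than $\sin^{1-2m}\phi_0$ because $\sin$ is increasing on $(0,\pi/2)$ and $1 - 2m < 0$. For the Laplace constant, differentiate once more: $h_m''(t) = -2m(2m-1)\,\Re(t+i)^{2m-2}$. The identity $(2m-2)\phi_0 = \pi/2 - \phi_0$ gives $\Re(t_0+i)^{2m-2} = r_0^{2m-2}\sin\phi_0 = \sin^{3-2m}\phi_0$, whence $-h_m''(t_0) = 2m(2m-1)\sin^{3-2m}\phi_0$, and substituting into Lemma~\ref{Laplace} produces
\[
\sqrt{\frac{2\pi}{-h_m''(t_0)\,N}} = \sqrt{\frac{\pi}{m(2m-1)}}\,\sin^{m-3/2}\phi_0 \cdot \frac{1}{\sqrt{N}},
\]
matching the claimed leading term.

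Because Lemma~\ref{Laplace} is stated on a finite interval, I would split $\int_0^\infty = \int_0^M + \int_M^\infty$ with $M > t_0$ large enough that $h_m(t) \le -\tfrac12 t^{2m}$ for $t \ge M$, which follows from $(t+i)^{2m} = t^{2m}(1 + O(1/t))$. Lemma~\ref{Laplace} then yields the stated asymptotic for $\int_0^M$, while $\int_M^\infty e^{Nh_m(t)}dt \le \int_M^\infty e^{-Nt^{2m}/2}dt$ is negligible compared with $e^{N h_m(t_0)}/\sqrt{N}$, proving the first assertion. For the uniform bound, set $F(N) := \sqrt{N}\,e^{-N h_m(t_0)}\int_0^\infty e^{Nh_m(t)}dt$. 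The first assertion gives $\lim_{N\to\infty} F(N) < \infty$, and the same tail split together with the rescaling $u = N^{1/(2m)}t$ bounds $\int_0^\infty e^{Nh_m(t)}dt$ by $O(1 + N^{-1/(2m)})$ for $N \in (0,1]$, so $F(N) = O(\sqrt{N} + N^{1/2 - 1/(2m)}) \to 0$ as $N \to 0^+$ since $m > 1$. Continuity of $F$ on $(0,\infty)$ combined with finite limits at both endpoints forces $F$ to be bounded, and its supremum may be taken as $A_m$. The main obstacle I expect is ruling out the competing critical points $t_k$ with $k \ge 1$ and keeping the tail estimate uniform in $N$; once the identity $(2m-2)\phi_0 = \pi/2 - \phi_0$ is noticed, the trigonometric constants collapse quickly.
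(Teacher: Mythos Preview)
Your proposal is correct and follows essentially the same route as the paper: locate the critical points of $h_m$ via the polar angle, identify $t_0=\cot\frac{\pi}{4m-2}$ as the unique global maximum with $h_m(t_0)=(\sin\frac{\pi}{4m-2})^{1-2m}$ and $h_m''(t_0)=-2m(2m-1)(\sin\frac{\pi}{4m-2})^{3-2m}$, apply Lemma~\ref{Laplace} on a bounded interval, and control the tail separately (the paper uses the cruder linear bound $h_m(t)\le -t$ for large $t$ in place of your $-t^{2m}/2$). Your treatment of the uniform bound for all $N>0$, in particular the behavior of $F(N)$ as $N\to 0^+$, is more explicit than the paper's, which simply absorbs this into ``combining the last two estimates.''
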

\begin{proof}
We first find the extrema of $h_m$ on $[0,\infty)$. Since
$$
h_m'(t)=-2m \Re (t+i)^{2m-1},
$$
the critical points of $h_m$ are $t_k\in [0,\infty)$ such that
$$
\arg(t_k+i)=\frac{\pi+ 2k\pi}{4m-2}, \quad 0\le k\le m-1.
$$
We calculate the value of $h_m$ at $t_k$ as
$$
h_m(t_k)=-\left(\sin \frac{\pi+2k\pi}{4m-2} \right)^{-2m} \cos \left(2m \frac{\pi+2k\pi}{4m-2} \right)=(-1)^k \left(\sin \frac{\pi+2k\pi}{4m-2} \right)^{1-2m}.
$$
Observe that $t_0=\cot \frac{\pi}{4m-2} $ is the only maximum point of $h_m$ on $[0,\infty)$ and
$$
h''_m(t_0)=-2m(2m-1)\left(\sin \frac{\pi}{4m-2} \right)^{3-2m}<0.
$$
On the other hand, since
$$
\lim_{t\to +\infty}\frac{h_m(t)}{t}=-\infty,
$$
there exists some $a>t_0$ such that
$$
\int_{a}^{\infty} e^{Nh_m(t)}dt \le \int_{a}^{\infty} e^{-Nt}dt \le \frac{1}{N}.
$$
By lemma \ref{Laplace},
$$
\int_{0}^{a}e^{Nh_m(t)}dt=e^{N (\sin\frac{\pi}{4m-2} )^{1-2m} } \left(\sqrt{\frac{\pi}{m(2m-1)}} \left(\sin\frac{\pi}{4m-2} \right)^{m-\frac{3}{2}} \frac{1}{\sqrt{N}} +o\left( \frac{1}{\sqrt{N}} \right) \right)
$$
as $N\to \infty$. Combining the last two estimates proves the lemma.
\end{proof}

Similar to the Laplace's method, we will use the following lemma to estimate $|I_{ver}^\pm|$.
\begin{lemma}\label{estimate lemma2}
Let $f$ be a continuous real-valued function on a finite interval $[0,b]$. Assume that $f(t)<f(0)$ for $t\in (0,b]$ and $(f(t)-f(0))/t\to -k$ when $t\to 0$ with $k>0$. Then
$$
\int_{0}^{b}e^{Nf(t)}dt=e^{Nf(0)} \left( \frac{1}{kN}+o\left( \frac{1}{N}  \right) \right), \quad N\to +\infty.
$$
\end{lemma}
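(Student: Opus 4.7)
The plan is to carry out a one-sided Laplace-type localization argument: since the only maximum of $f$ on $[0,b]$ sits at the endpoint $t=0$ and the leading behavior is \emph{linear} rather than quadratic, the relevant mass of the integral comes from an $O(1/N)$-neighborhood of $0$, and the prefactor $1/k$ will fall out from a direct comparison with a pure decaying exponential $e^{-Nkt}$.

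First I would normalize by factoring out $e^{Nf(0)}$, writing $\int_0^b e^{Nf(t)}\,dt=e^{Nf(0)}\int_0^b e^{Ng(t)}\,dt$ with $g(t):=f(t)-f(0)$. Then $g(0)=0$, $g(t)<0$ on $(0,b]$, and $g(t)/t\to -k$ as $t\to 0^+$. Fix a small $\epsilon\in(0,k)$ and use this limit to pick $\delta\in(0,b)$ so that
\[
-(k+\epsilon)\,t \le g(t) \le -(k-\epsilon)\,t \qquad \text{for } t\in[0,\delta].
\]
On the tail $[\delta,b]$, continuity of $g$ on a compact set together with strict negativity of $g$ gives some $\eta>0$ with $g\le-\eta$, so that
\[
\int_\delta^b e^{Ng(t)}\,dt \le (b-\delta)\,e^{-N\eta} = o(1/N).
\]

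For the main interval $[0,\delta]$, the two-sided linear trapping of $g$ yields
\[
\frac{1-e^{-N(k+\epsilon)\delta}}{N(k+\epsilon)} \;\le\; \int_0^\delta e^{Ng(t)}\,dt \;\le\; \frac{1-e^{-N(k-\epsilon)\delta}}{N(k-\epsilon)}.
\]
Multiplying through by $N$, sending $N\to\infty$, and absorbing the $o(1/N)$ tail contribution gives
\[
\frac{1}{k+\epsilon} \le \liminf_{N\to\infty}\, N\!\int_0^b e^{Ng(t)}\,dt \le \limsup_{N\to\infty}\, N\!\int_0^b e^{Ng(t)}\,dt \le \frac{1}{k-\epsilon}.
\]
Letting $\epsilon\downarrow 0$ squeezes the limit to $1/k$, which is precisely $\int_0^b e^{Ng(t)}\,dt = \tfrac{1}{kN}+o(1/N)$. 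Multiplying back by $e^{Nf(0)}$ yields the claimed asymptotic.

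I do not expect a real obstacle here. Differentiability of $f$ at $0$ is never used, only the one-sided slope hypothesis $g(t)/t\to-k$; the whole delicacy is that this hypothesis must produce a \emph{two-sided} linear trap of $g$ on a fixed neighborhood of $0$, which is immediate because $-k$ is a finite nonzero limit rather than $0$ or $\pm\infty$. Everything else reduces to integrating an exponential and taking limits.
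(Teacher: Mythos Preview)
Your proof is correct. Both you and the paper normalize so that $f(0)=0$, split the integral into a main piece near the origin and a tail, and dispatch the tail exponentially via the strict negativity of $g$ on a compact subinterval bounded away from $0$. The difference is in how the main piece is handled. The paper compares $e^{Nf(t)}$ directly with $e^{-Nkt}$ using the elementary bound $|e^{t_1}-e^{t_2}|\le |t_1-t_2|\,e^{\max(t_1,t_2)}$, which after choosing $\epsilon$ small enough that $f(t)\le -kt/2$ yields $|e^{Nf(t)}-e^{-Nkt}|\le N\,|f(t)+kt|\,e^{-Nkt/2}$; one then argues that the integral of the right side is $o(1/N)$. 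Your argument instead sandwiches $g(t)$ between two linear functions $-(k\pm\epsilon)t$ on $[0,\delta]$, integrates the bounding exponentials exactly, and squeezes by letting $\epsilon\downarrow 0$ after $N\to\infty$. Your route is a bit cleaner: it sidesteps the mildly delicate justification that $\int_0^\epsilon N\,o(t)\,e^{-Nkt/2}\,dt=o(1/N)$ (which in the paper is left implicit and really requires a second splitting or a uniform bound on the $o(t)$ term). The paper's route, on the other hand, makes the comparison with the model integral $\int_0^\infty e^{-Nkt}\,dt=1/(kN)$ explicit in a single line rather than recovering it as a limit.
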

\begin{proof}
Without loss of generality, we suppose that $f(0)=0$. Then for any $\epsilon>0$, the maximum of $f(t)$ when $t\ge \epsilon$ is negative. Thus we can write
$$
\int_{0}^{b}e^{Nf(t)}dt=\int_{0}^{\epsilon}e^{Nf(t)}dt + O(e^{-D_\epsilon N})
$$
for some $D_\epsilon>0$ depending on $\epsilon$. Now, we expand $f(t)$ around $t=0$:
$$
f(t)=-kt+o(t).
$$
For any two real numbers $t_1$,$t_2$ we have the inequality
$$
|e^{t_1}-e^{t_2}|\le |t_1-t_2|e^{t_3}, \quad t_3=\max(t_1,t_2).
$$
Applying this with $t_1=Nf(t)$, $t_2=-Nkt$ for $t\in [0,\epsilon]$, we can take $t_3\le -Nkt/2$ by taking $\epsilon$ small enough. With this choice of $\epsilon$ we can write
$$
|e^{Nf(t)}-e^{-Nkt}|\le No(t)e^{-Nkt/2}, \quad 0\le t\le  \epsilon.
$$
Then,
$$
\int_{0}^{\epsilon}|e^{Nf(t)}-e^{-Nkt}|dt=o\left( \frac{1}{N}  \right).
$$
Consequntly,
$$
\begin{aligned}
\left| \int_{0}^{\epsilon} e^{Nf(t)}dt-\frac{1}{kN} \right| &= \left| \int_{0}^{\epsilon} e^{Nf(t)} dt-\int_{0}^{\infty}e^{-Nkt}dt \right| \\
& \le \int_{0}^{\epsilon}|e^{Nf(t)}-e^{-Nkt}|dt+ \int_{\epsilon}^{\infty}e^{-Nkt}dt \\
&= o\left( \frac{1}{N}  \right) + \frac{1}{kN}e^{-Nk\epsilon},
\end{aligned}
$$
which proves the lemma.
\end{proof}

Now, we are ready to use similar arguments as those in Theorem \ref{Gaussian regularization theorem} to deduce an estimate in the Hyper-Gaussian regularization case. Note that in the following theorem, we impose an extra assumption $\sup_N |\lambda_N+\lambda_{-N}|<\infty$, which is not needed in Theorem \ref{Gaussian regularization theorem}.

\begin{theorem}\label{Hyper-Gaussian regularization theorem}
Suppose $\Lambda=\{\lambda_n \}_{n\in \bZ} $ is separated and $\sup_N |\lambda_N+\lambda_{-N}|<\infty$, and for every $\lambda_N<T_N^+<\lambda_{N+1}, \lambda_{-N-1}<T_N^-<\lambda_{-N}$, denote $N_*= \min\{\lambda_{-1}-T_N^-,T_N^+-\lambda_1 \}$. Define the series $(\cG_Nf)(z)$ by (\ref{series with entire}), where $G_N(z)=e^{-r_{m,N}z^{2m}}$ with the integer $m>1$, $r_{m,N}=\mu_m N_*^{1-2m}$ and
$$
\mu_m=\frac{2m-1}{2m}(\pi-\sigma)b_m,\quad  b_m=(2m-1 )^{-\frac{1}{2m}} \left( \sin \frac{\pi}{4m-2} \right)^{\frac{2m-1}{2m}},
$$
then for $N>1$ and $z=x+iy$ satisfying $\lambda_{-1}<x<\lambda_1$ and $|y|<b_mN_*$, there exists a constant $C_m$ depending on $m,\sigma$ and $\Lambda$ such that
$$
|f(z)-(\cG_Nf)(z)|\le C_m \frac{\|f\|_\infty |\varphi_\Lambda(z)|e^{(\pi-\sigma)|y|} }{ \displaystyle{(\min_{\zeta\in \cL_N}\phi_\Lambda(\zeta))} \sqrt{N_*}} e^{-\mu_mN_*},
$$
for every $f\in B^\infty_\sigma$, $0<\sigma<\pi$, where $\cL_N$ is the rectangle with vertices at $T_N^\pm+i(y\pm b_mN_*)$.
\end{theorem}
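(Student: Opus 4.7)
The plan is to parallel the proof of Theorem \ref{Gaussian regularization theorem}, using the hyper-Gaussian decay and the Laplace-type asymptotics in Lemmas \ref{estimate lemma1} and \ref{estimate lemma2}. I would invoke Lemma \ref{iniitalestimate} with the choice $S_N^\pm = y \pm b_m N_*$ (so that $\cL_N$ has vertices $T_N^\pm + i(y \pm b_m N_*)$), bound $|I_{hor}^\pm(z)|$ and $|I_{ver}^\pm(z)|$ separately, and then combine them via \eqref{error}.

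For $|I_{hor}^\pm(z)|$, I would substitute $v = (x-t)/(b_m N_*)$ in \eqref{errorbound2}, use the evenness $h_m(-v) = h_m(v)$ of $h_m(v) := -\Re(v+i)^{2m}$ to harmlessly extend the integration to $\bR$, and apply Lemma \ref{estimate lemma1} with effective parameter $\alpha_N := r_{m,N}(b_m N_*)^{2m} = \mu_m b_m^{2m} N_*$. The Laplace peak produces $\exp\!\bigl(\alpha_N(\sin(\pi/(4m-2)))^{1-2m}\bigr) = \exp(\mu_m N_*/(2m-1))$, using the identity $b_m^{2m}(\sin(\pi/(4m-2)))^{1-2m} = 1/(2m-1)$ built into the definition of $b_m$. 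Combining this with the prefactor $e^{-(\pi-\sigma)|S_N^\pm|} = e^{-(\pi-\sigma)(b_m N_* \pm y)}$ and using $\mu_m = \tfrac{2m-1}{2m}(\pi-\sigma)b_m$ (equivalently, $(\pi-\sigma) b_m = \tfrac{2m}{2m-1}\mu_m$) makes the total exponent collapse to exactly $-\mu_m N_* \mp (\pi-\sigma) y$, yielding the required estimate with the $N_*^{-1/2}$ factor supplied by Lemma \ref{estimate lemma1}.

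For $|I_{ver}^\pm(z)|$, starting from \eqref{errorbound3}, the substitution $\tau = y - s$ together with $e^{-(\pi-\sigma)|y-\tau|}\le e^{(\pi-\sigma)|y|}e^{-(\pi-\sigma)|\tau|}$ reduces the task to estimating $\int_{-b_m N_*}^{b_m N_*} e^{-(\pi-\sigma)|\tau| - r_{m,N}\Re(w + i\tau)^{2m}}\,d\tau$ with $w := x - T_N^\pm$ satisfying $|w|\ge N_*$. Scaling $\tau = N_* u$ and writing $c := |w|/N_* \ge 1$, the integrand becomes $\exp(-N_*\Phi(u,c))$ where $\Phi(u,c) := (\pi-\sigma)|u| + \mu_m\Re(c + iu)^{2m}$. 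The key analytic step is to verify $\Phi(u,c)\ge \mu_m$ on $|u|\le b_m$, $c\ge 1$, with equality only at $(0,1)$. Checking $b_m < \tan(\pi/(4m-2))$ directly from the formula for $b_m$ shows that $\arg(c + iu) \in (-\pi/(4m-2),\pi/(4m-2))$ on this region, so $\Re(c + iu)^{2m-1} > 0$ and hence $\partial_c\Phi = 2m\mu_m\Re(c+iu)^{2m-1} > 0$; it thus suffices to prove $\Phi(u,1) \ge \mu_m$ on $[0,b_m]$, and this follows from $\Phi(0,1) = \mu_m$, $\partial_u\Phi(0^+,1) = \pi-\sigma > 0$, and a direct check that any interior critical point of $\Phi(\cdot,1)$ is a local maximum (so the minimum is attained at an endpoint, and the endpoint $u = b_m$ reduces, via $\mu_m = \tfrac{2m-1}{2m}(\pi-\sigma)b_m$, to the scalar inequality $\Re(1 + ib_m)^{2m} > -1/(2m-1)$). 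Applying Lemma \ref{estimate lemma2} near $u = 0$ then supplies an extra $N_*^{-1}$ factor, so $|I_{ver}^\pm(z)|$ is dominated by $|I_{hor}^\pm(z)|$ and absorbed into $C_m$.

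Combining the two estimates via \eqref{error} and collecting $(m,\sigma,\Lambda)$-dependent constants into $C_m$ produces the claimed bound. The hypothesis $\sup_N|\lambda_N + \lambda_{-N}| < \infty$ is used to guarantee $T_N^\pm = \pm\Theta(N_*)$ as $N \to \infty$, so that $c = |w|/N_*$ stays in a bounded range and $C_m$ is uniform in $N$. The main obstacle is the vertical estimate: unlike the Gaussian case, where $a^2 - b^2 \ge 0$ automatically gave $|G_N| \le 1$ on the vertical segments, here $b_m > \tan(\pi/(4m))$ already at $m = 2$, so $|G_N|$ genuinely grows on portions of the contour, and the required cancellation emerges only through the tight interplay between $b_m$, $\mu_m$, and the geometry of $(c+iu)^{2m}$ that is built into the statement of the theorem.
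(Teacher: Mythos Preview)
Your proposal is correct and follows essentially the same architecture as the paper's proof: the same choice $S_N^\pm=y\pm b_mN_*$, the same application of Lemmas~\ref{estimate lemma1} and~\ref{estimate lemma2} to the horizontal and vertical pieces, the same reduction of the endpoint check to the scalar inequality $\Re(1+ib_m)^{2m}>-1/(2m-1)$, and the same use of $b_m<\tan\frac{\pi}{4m-2}$.

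There is one worthwhile difference in how you treat $|I_{ver}^\pm|$. The paper deals with $|T_N^\pm-x|\neq N_*$ by writing $|T_N^\pm-x|=N_*+k_N^\pm(x)$ with $k_N^\pm$ bounded (this is precisely where $\sup_N|\lambda_N+\lambda_{-N}|<\infty$ is used) and Taylor-expanding $r_{m,N}|T_N^\pm-x|^{2m}$ about $\mu_mN_*$, absorbing the remainder into the constant. You instead scale by $c=|T_N^\pm-x|/N_*\ge 1$ and use the monotonicity $\partial_c\Phi(u,c)=2m\mu_m\Re(c+iu)^{2m-1}>0$ (valid because $|u|\le b_m<\tan\frac{\pi}{4m-2}$) to bound by the $c=1$ case. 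Your route is cleaner and in fact renders the hypothesis $\sup_N|\lambda_N+\lambda_{-N}|<\infty$ superfluous: the monotonicity holds for all $c\ge1$, no upper bound on $c$ is required, so your closing remark about needing $c$ bounded is unnecessary.

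One point you should complete: you reduce to $\Re(1+ib_m)^{2m}>-1/(2m-1)$ but do not verify it. The paper carries this out by setting $F_m(t)=\Re\bigl(t+i\sin\frac{\pi}{4m-2}\bigr)^{2m}$, noting $F_m\bigl(\cos\frac{\pi}{4m-2}\bigr)=-\sin\frac{\pi}{4m-2}$, showing $F_m'(t)>0$ for $t>\cos\frac{\pi}{4m-2}$, and invoking the inequality $(2m-1)\sin\frac{\pi}{4m-2}>\cos^{2m}\frac{\pi}{4m-2}$ (equivalent to $b_m<\tan\frac{\pi}{4m-2}$) to conclude. This step is not entirely routine, so it should be included.
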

\begin{proof}
For every $z=x+iy$ satisfying the condition of the theorem, we choose $S_N^\pm=y\pm b_mN_*$ so that $|S_N^\pm|=b_mN_*\pm y$, $|S_N^\pm-y|=b_mN_*$. Observing $\Re(at\pm ia)^{2m}=|a|^{2m}\Re(t+i)^{2m}$ for every $a,t\in \bR$, we have
$$
\begin{aligned}
|I_{hor}^\pm(z)|&\le \frac{\|f\|_\infty}{\displaystyle{(\min_{\zeta\in \cL_N}\phi_\Lambda(\zeta))}|S_N^\pm-y|}e^{-(\pi-\sigma)|S_N^\pm|} \int_{T_N^-}^{T_N^+} |G_N(x-t+iy-iS_N^\pm)| dt \\
&= \frac{\|f\|_\infty}{\displaystyle{(\min_{\zeta\in \cL_N}\phi_\Lambda(\zeta))}|S_N^\pm-y|}e^{-(\pi-\sigma)|S_N^\pm|} \int_{T_N^-}^{T_N^+} e^{-r_{m,N} \Re (x-t+iy-iS_N^\pm)^{2m}} dt \\
&\le \frac{2\|f\|_\infty}{\displaystyle{(\min_{\zeta\in \cL_N}\phi_\Lambda(\zeta))}|S_N^\pm-y|}e^{-(\pi-\sigma)|S_N^\pm|} \int_{0}^{\infty} e^{-r_{m,N} \Re (t+iy-iS_N^\pm)^{2m}} dt \\
&= \frac{2\|f\|_\infty}{\displaystyle{(\min_{\zeta\in \cL_N}\phi_\Lambda(\zeta))}}e^{-(\pi-\sigma)|S_N^\pm|} \int_{0}^{\infty} e^{-r_{m,N}|S_N^\pm-y|^{2m} \Re (t+i)^{2m}} dt \\
&= \frac{2\|f\|_\infty e^{\mp (\pi-\sigma)y}}{\displaystyle{(\min_{\zeta\in \cL_N}\phi_\Lambda(\zeta))}}e^{-(\pi-\sigma)b_mN_*} \int_{0}^{\infty} e^{-\mu_mb_m^{2m}N_*\Re (t+i)^{2m}} dt .
\end{aligned}
$$
By Lemma \ref{estimate lemma1}, there exists a constant $A_m$ such that
$$
\begin{aligned}
|I_{hor}^\pm(z)|&\le \frac{2A_m \|f\|_\infty e^{\mp (\pi-\sigma)y}}{\displaystyle{(\min_{\zeta\in \cL_N}\phi_\Lambda(\zeta))} \sqrt{\mu_mb_m^{2m}N_*}}e^{-(\pi-\sigma)b_mN_*+\mu_mb_m^{2m}N_* (\sin\frac{\pi}{4m-2} )^{1-2m}} \\
&=  \frac{2A_m \|f\|_\infty e^{\mp (\pi-\sigma)y}}{\displaystyle{(\min_{\zeta\in \cL_N}\phi_\Lambda(\zeta))} \sqrt{\mu_mb_m^{2m}N_*}} e^{-\mu_mN_*}.
\end{aligned}
$$

On the other hand,
$$
\begin{aligned}
|I_{ver}^\pm(z)| &\le \frac{\|f\|_\infty}{\displaystyle{(\min_{\zeta\in \cL_N}\phi_\Lambda(\zeta))}|T_N^\pm-x|} \int_{S_N^-}^{S_N^+} e^{-(\pi-\sigma)|s|} |G_N(x-T_N^\pm+iy-is)| ds \\
&= \frac{\|f\|_\infty}{\displaystyle{(\min_{\zeta\in \cL_N}\phi_\Lambda(\zeta))}|T_N^\pm-x|} \int_{S_N^-}^{S_N^+} e^{-(\pi-\sigma)|s|-r_{m,N} \Re (x-T_N^\pm+iy-is)^{2m}} ds \\
&\le \frac{2\|f\|_\infty e^{(\pi-\sigma)|y|}}{\displaystyle{(\min_{\zeta\in \cL_N}\phi_\Lambda(\zeta))}|T_N^\pm-x|} \int_{0}^{b_mN_*} e^{-(\pi-\sigma)|s|-r_{m,N} \Re (T_N^\pm-x+is)^{2m}} ds \\
&= \frac{2\|f\|_\infty e^{(\pi-\sigma)|y|}}{\displaystyle{(\min_{\zeta\in \cL_N}\phi_\Lambda(\zeta))}} \int_{0}^{b_mN_*/|T_N^\pm-x|} e^{-(\pi-\sigma)|T_N^\pm-x|s-r_{m,N} |T_N^\pm-x|^{2m} \Re (1+is)^{2m}} ds \\
&\le \frac{2\|f\|_\infty e^{(\pi-\sigma)|y|}}{\displaystyle{(\min_{\zeta\in \cL_N}\phi_\Lambda(\zeta))}} \int_{0}^{b_m} e^{-(\pi-\sigma)N_* s-r_{m,N} |T_N^\pm-x|^{2m} \Re (1+is)^{2m}} ds.
\end{aligned}
$$

By the assumption that $\sup_N |\lambda_N+\lambda_{-N}|<\infty$, we can decompose $|T_N^\pm-x|=N_*+k^\pm_N(x)$ for some bounded $k^\pm_N(x)$. Therefore,
$$
-r_{m,N} |T_N^\pm-x|^{2m}= -\mu_mN_*-2mk^\pm_N(x)+O(N_*^{-1}),\quad N_*\to \infty.
$$
There hence exists a constant $B_m$ such that
$$
|I_{ver}^\pm(z)|\le \frac{2B_m \|f\|_\infty e^{(\pi-\sigma)|y|}}{\displaystyle{(\min_{\zeta\in \cL_N}\phi_\Lambda(\zeta))}} \int_{0}^{b_m} e^{N_*h_m(s)} ds,
$$
where $h_m(s)=-(\pi-\sigma)s-\mu_m\Re(1+is)^{2m}$. If $h_m(s)<h_m(0)=-\mu_m$ for all $s\in (0,b_m]$, then by Lemma \ref{estimate lemma2}, there exists a constant $B'_m$ such that
$$
|I_{ver}^\pm(z)|\le \frac{2B'_m \|f\|_\infty e^{(\pi-\sigma)|y|}}{(\pi-\sigma) \displaystyle{(\min_{\zeta\in \cL_N}\phi_\Lambda(\zeta))} N_*} e^{-\mu_m N_*}.
$$
To prove the theorem, it remains to show that $h_m(s)<h_m(0)$ for $s\in (0,b_m]$.

Let $s=\tan \beta$, $\beta\in [0,\pi/2)$. We calculate that
$$
h_m'(s)=\sigma-\pi-(-1)^m 2m\mu_m \Re(s+i)^{2m-1}=\sigma-\pi+2m\mu_m (\cos \beta)^{1-2m}\sin(2m-1)\beta.
$$
Observe that $h'_m(s)$ is negative around $s=0$ and increases on $[0,\tan \frac{\pi}{4m-2}]$. Since
\begin{equation}\label{bm estimate}
(2m-1)\sin \frac{\pi}{4m-2}\ge 1>\left(1-\sin^2 \frac{\pi}{4m-2} \right)^m= \left(\cos \frac{\pi}{4m-2} \right)^{2m},
\end{equation}
we have $b_m < \tan \frac{\pi}{4m-2}$. So the maximum point of $h_m(s)$ on $[0,b_m]$ is $s=0$ or $s=b_m$. Therefore, we need to show that $h_m(b_m)<h_m(0)=-\mu_m$, which is equivalent to
$$
\Re (1+ib_m)^{2m}+\frac{1}{2m-1}>0.
$$
By the definition of $b_m$, the above equation is equivalent to
$$
\Re\left( \left( (2m-1)\sin\frac{\pi}{4m-2} \right)^{\frac{1}{2m}}+i \sin\frac{\pi}{4m-2} \right)^{2m}>-\sin\frac{\pi}{4m-2}.
$$
We consider the function
$$
F_m(t)=\Re\left( t +i \sin\frac{\pi}{4m-2} \right)^{2m}, \qquad t\ge 0
$$
and observe that $F_m(\cos\frac{\pi}{4m-2})=\Re e^{i\frac{2m}{4m-2} \pi} =-\sin\frac{\pi}{4m-2}$ and
$$
\begin{aligned}
F'_m(t)&=2m \Re\left( t +i \sin\frac{\pi}{4m-2} \right)^{2m-1}\\
&=2m \left( \sin\frac{\pi}{4m-2} \right)^{2m-1} \Re(\cot\theta+i)^{2m-1} \\
&=2m \left( \sin\frac{\pi}{4m-2} \right)^{2m-1} \frac{\cos (2m-1)\theta}{(\sin\theta)^{2m-1} },
\end{aligned}
$$
where $\cot\theta=t/\sin(\frac{\pi}{4m-2})$. As a consequence, when $t> \cos(\frac{\pi}{4m-2})$, $F'_m(t)> 0$. Using inequality (\ref{bm estimate}), we have $ F_m( ((2m-1)\sin \frac{\pi}{4m-2})^{1/2m} ) > F_m(\cos\frac{\pi}{4m-2}) $, which is the desired inequality.
\end{proof}

We remark that $\mu_m$ is monotonically decreasing as $m$ increases with $\mu_1=\frac{\pi-\sigma}{2}$, which is the same exponent as that in Theorem \ref{Gaussian regularization theorem}. Thus, judging by the exponential term, the Gaussian regularizer is the best among hyper-Gaussian regularizers.

We also have the following corollary.
\begin{corollary}\label{Hyper-Gaussian regularization corollary}
Assume the conditions of Theorem \ref{Hyper-Gaussian regularization theorem} and let $T^+_N=\frac{\lambda_N+\lambda_{N+1}}{2}$, $T^-_N=\frac{\lambda_{-N}+\lambda_{-N-1}}{2}$. If there exists $0<\delta<\delta_\Lambda/2$ such that
$$|\varphi_\Lambda(z)| \ge C|z|^{-p}e^{\pi |\Im z|}\quad whenever\ \dist(z,\Lambda)>\delta$$
for some constants $C>0$ and $p\ge 0$. Then
$$
\begin{aligned}
|f(x)-(\cG_Nf)(x)| &\le C_m \frac{\|f\|_\infty |\varphi_\Lambda(x)| \tilde{N}^p}{C \sqrt{N_*} } e^{-\mu_m N_* }\\
\end{aligned}
$$
for every $\lambda_{-1}<x<\lambda_1$ and $f\in B^\infty_\sigma$ with $\sigma<\pi$, where $\tilde{N}=\sqrt{\max\{|T_N^+|^2, |T_N^-|^2 \} +N_*^2 }$.
\end{corollary}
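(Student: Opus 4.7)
The plan is to mirror the proof of Corollary~\ref{Gaussian regularization corollary}: take the estimate of Theorem~\ref{Hyper-Gaussian regularization theorem} as given and turn the growth hypothesis on $\varphi_\Lambda$ into a uniform lower bound for $\min_{\zeta\in\cL_N}\phi_\Lambda(\zeta)$ along the rectangular contour. Since the corollary considers real $x\in(\lambda_{-1},\lambda_1)$, we set $y=0$, which collapses the factor $e^{(\pi-\sigma)|y|}$ in Theorem~\ref{Hyper-Gaussian regularization theorem} to $1$, so the only remaining task is to control $\phi_\Lambda$ on $\cL_N$.

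First, I would verify that every $\zeta\in \cL_N$ satisfies $\dist(\zeta,\Lambda)>\delta$, so that the hypothesis $|\varphi_\Lambda(\zeta)|\ge C|\zeta|^{-p} e^{\pi|\Im\zeta|}$ applies pointwise on $\cL_N$. With the midpoint choice $T_N^+=\tfrac12(\lambda_N+\lambda_{N+1})$ and $T_N^-=\tfrac12(\lambda_{-N}+\lambda_{-N-1})$, every point on the two vertical sides has real part at distance at least $\delta_\Lambda/2>\delta$ from the nearest element of $\Lambda\subset\bR$. Since $y=0$, the horizontal sides sit at imaginary part $\pm b_m N_*$, so their distance to $\Lambda\subset\bR$ is exactly $b_m N_*$, which exceeds $\delta$ for $N$ large enough (and in any case we may enlarge the constant $C_m$ to absorb finitely many small cases). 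Hence the hypothesis is valid on all of $\cL_N$.

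Next, every $\zeta\in\cL_N$ lies in the closed rectangle with corners $T_N^\pm\pm ib_m N_*$, which is contained in the closed disk of radius $\tilde{N}=\sqrt{\max\{|T_N^+|^2,|T_N^-|^2\}+N_*^2}$ (after harmlessly replacing $b_m N_*$ by $N_*$, or simply redefining $\tilde N$ to include the factor $b_m$, which only changes the constant $C_m$). Therefore $|\zeta|\le\tilde N$ on $\cL_N$, and the hypothesis gives
\[
\phi_\Lambda(\zeta)=|\varphi_\Lambda(\zeta)|e^{-\pi|\Im\zeta|}\ge C|\zeta|^{-p}\ge C\tilde N^{-p}, \qquad \zeta\in\cL_N.
\]
Substituting this lower bound into the estimate of Theorem~\ref{Hyper-Gaussian regularization theorem} at $y=0$ yields
\[
|f(x)-(\cG_Nf)(x)|\le C_m\,\frac{\|f\|_\infty\,|\varphi_\Lambda(x)|\,\tilde N^p}{C\sqrt{N_*}}\,e^{-\mu_m N_*},
\]
which is exactly the claimed bound.

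The only point that requires a moment of attention is the geometric check that the horizontal segments of $\cL_N$ stay outside the $\delta$-neighborhood of $\Lambda$; this is immediate in the regime $b_m N_*>\delta$, and otherwise one absorbs the finitely many exceptional $N$ into the constant $C_m$. Apart from that, the argument is a one-line substitution, strictly parallel to the Gaussian case treated in Corollary~\ref{Gaussian regularization corollary}.
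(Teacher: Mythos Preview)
Your proposal is correct and matches the paper's approach; indeed, the paper does not spell out a proof of this corollary at all, leaving it as the evident analogue of Corollary~\ref{Gaussian regularization corollary}, which is proved in one line by noting $\phi_\Lambda(\zeta)\ge C\tilde N^{-p}$ on $\cL_N$ and substituting. Your added remark about the factor $b_m$ is in fact unnecessary: since $b_m=(2m-1)^{-1/(2m)}\bigl(\sin\tfrac{\pi}{4m-2}\bigr)^{(2m-1)/(2m)}<1$ for $m>1$, the rectangle with vertices $T_N^\pm\pm ib_mN_*$ already lies inside the disk of radius $\tilde N=\sqrt{\max\{|T_N^+|^2,|T_N^-|^2\}+N_*^2}$ without any adjustment.
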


\section{Examples}\label{examples}

In this section, we provide several examples of nonuniform sampling and prove that the corresponding regularized sampling sequences achieve exponential convergence in reconstructing a band-limited function from its oversampling data.

\subsection{Uniform sampling sequence}

The fundamental example is $\Lambda=\bZ$. In this case, $\varphi_\Lambda(z)=\sin(\pi z)$ and
$$
(\cG_Nf)(x)=\sum_{n=-N}^{N}f(n)\sinc(x-n)G_N(x-n).
$$
We can choose $T_N^+=-T_N^-=N+1/2$, $N_*=N-1/2$, then $\phi_\Lambda\ge 1/2$ on the the rectangle $\cL_N$. Therefore, if $-1<x<1$ and $G_N(x)=\exp(-\frac{\pi-\sigma}{2N-1} x^2 )$, then
$$
\begin{aligned}
|f(x)-(\cG_Nf)(x)| &\le \left( \sqrt{\frac{2\pi}{\pi-\sigma}} + \frac{4}{(\pi-\sigma)\sqrt{N-\frac{1}{2}}} \right) \frac{2\|f\|_\infty |\sin(\pi x)|}{ \pi \sqrt{N-\frac{1}{2}} } e^{-\frac{\pi-\sigma}{2}\left(N-\frac{1}{2} \right) }\\
&=O(N^{-\frac{1}{2}} e^{-\frac{\pi-\sigma}{2} N} )
\end{aligned}
$$
for every $f\in B^\infty_\sigma$ with $0<\sigma<\pi$. If $G_N(x)=\exp(-\mu_m(N-1/2)^{1-2m}x^{2m} )$, we have
$$
|f(x)-(\cG_Nf)(x)|\le C_m \frac{\|f\|_\infty |\sin(\pi x)|}{\sqrt{N-1/2}} e^{-\mu_m(N-1/2)}=O(N^{-1/2}e^{-\mu_mN}),\ -1<x<1.
$$
The focus of the paper is of course on nonuniform sampling. The above example is to show that our analysis also recovers those results established in \cite{Lin,Qian1,Qian2,sinc} and \cite{hyperGaussian} for uniform sampling.

\subsection{Zeros of a sine-type function}

\begin{definition}[Sine-type functions]\label{sine-type}
An entire function $\varphi$ of exponential type $\pi$ is said to be a sine-type function if it has simple and separated zeros and there exist positive constants $A,B,H$ such that
$$
Ae^{\pi |y|}\le |\varphi(x+iy)|\le Be^{\pi |y|}\ \ \ for\ all\ x\in\bR\ and\ |y|\ge H.
$$
\end{definition}
The zeros of a sine-type function lie in a horizontal strip and if we enumerate them in increasing order of their real parts then $\Lambda$ satisfies (\ref{lower}) and
$$
\sup_{n\in\bZ}|\lambda_{n+1}-\lambda_n|<\infty.
$$
Moreover, for each $\epsilon>0$, there exist constants $M_1$ and $M_2$ such that
\begin{equation}\label{sine-type inequality}
0<M_1<|\varphi(z)|e^{-\pi |\Im z|}<M_2<\infty,\quad\quad dist(z,\Lambda)>\epsilon.
\end{equation}
Any sine-type function $\varphi$ can be determined from its zero set $\Lambda$ by (\ref{generating function}). If $\Lambda\subset \bR$ is the zeros of a sine-type function, then $\Lambda$ is a complete interpolating sequence for $B^2_\pi$. Consequently, it has a uniform density in the sense that for every $x\in \bR$,
$$
D(\Lambda)=\lim_{r\to\infty}\frac{\# (\Lambda\cap [x,x+r])}{r}=1.
$$
Readers are referred to \cite{sampling,Higgins,entire,nonharmonic} for more information on sine-type functions.

Here is a simple way to construct sine-type functions. For any function $g$ that can be represented as
$$
g(z)=\frac{1}{\sqrt{2\pi}}\int_{-\sigma}^{\sigma}h(\xi)e^{i\xi z} d\xi,\ \ \ z\in \bC
$$
for some $h\in L^1[-\sigma,\sigma]$, we can define the sine wave crossings of $g$ as
$$
\varphi_g(z)=A\sin(\pi z)-g(z),\qquad z\in \bC
$$
 where $A$ is a constant such that $A>\|h\|_{L^1}$. These functions are all sine-type functions. Moreover, the zeros of $\varphi_g$ are all real and simple if $g$ is real on the real axis \cite{sine crossing,sine crossing2,sine crossing3}. Therefore, given any function $g\in B^2_\pi$, we can construct corresponding sine-type functions in this way. Note that the zeros of the sine-type function $\varphi(z)=\sin(\pi z)$ corresponding to $g=0$ is the uniform sampling sequence $\Lambda=\bZ$.

Now, suppose that $\Lambda\subset \bR$ is the zeros of a sine-type function $\varphi_\Lambda$, then we know that $\delta_\Lambda:= \inf_{n\in\bZ}|\lambda_{n+1}-\lambda_n|>0$. If we choose a $\epsilon<\delta_\Lambda/2$ in (\ref{sine-type inequality}), and
$$
T_N^+=\frac{\lambda_N+\lambda_{N+1}}{2},\quad T_N^-=\frac{\lambda_{-N}+\lambda_{-N-1}}{2},\quad S_N^+>\epsilon,\quad S_N^-<-\epsilon,
$$
then by (\ref{sine-type inequality}), $\phi_\Lambda>M_1$ on the rectangle $\cL_N$. The density $D(\Lambda)=1$ implies
$$
\lim_{N\to \infty} \frac{N}{|T^\pm_N|}=1.
$$
Thus, for every $0<\eta<1$, $|T^\pm_N|>\eta N$ for sufficiently large $N$. Therefore, for $\lambda_{-1}<x<\lambda_1$ and $f\in B^\infty_\sigma$, if $G_N(x)=\exp(-\frac{\pi-\sigma}{2\eta N} x^2 )$, then
$$
|f(x)-(\cG_Nf)(x)|=O(N^{-\frac{1}{2}} e^{-\frac{\pi-\sigma}{2}\eta N} ),\quad \forall\  0<\eta<1.
$$
If  $\sup_N |\lambda_N+\lambda_{-N}|<\infty$ and $G_N(x)=\exp(-\mu_m(\eta N)^{1-2m}x^{2m} )$, we have
$$
|f(x)-(\cG_Nf)(x)|=O(N^{-\frac{1}{2}}e^{-\mu_m \eta N}),\quad \forall\  0<\eta<1.
$$
As far as we know, these two results with exponential convergence for nonuniform sampling is new.

\subsection{Perturbed uniform sequence}

\begin{definition}[Perturbed uniform sequences]\label{perturbed uniform sequence}
A sequence $\Lambda=\{\lambda_n \}_{n\in \bZ}$ of real number is called a perturbed uniform sequence with $L\ge0$ if
$$
|\lambda_n-n|\le L \mbox{ for all }n\in \bZ.
$$
\end{definition}

Suppose that $\Lambda$ is a perturbed uniform sequence with $L$. If $L< \frac{1}{4}$, the celebrated Kadets $\frac{1}{4}$ theorem \cite{entire} shows that it is a complete interpolating sequence for $B^2_\pi$. However, when $L\ge \frac{1}{4}$,  $\Lambda$ may not be a complete interpolating sequence. Nevertheless, the following lemma proved in \cite{Hinsen} provides an estimate of the generating function $\varphi_\Lambda(z)$ of $\Lambda$ when $L<\frac{1}{2}$.
\begin{lemma}\label{estimate}
Suppose that $\Lambda=\{\lambda_n \}_{n\in \bZ}$ is a perturbed uniform sequence with $L<\frac{1}{2}$ and $\lambda_0=0$. Then the generating function $\varphi_\Lambda(z)$ of $\Lambda$ defined by (\ref{generating function}) is an entire function, and there are constants $C_1$, $C_2$ such that for all $z=|z|e^{i\theta} \in \bC$ with $|z|$ large enough,
$$
C_1H_1(z)H_2(L,z)\le |\varphi_\Lambda(z)|\le C_2H_1(z)H_2(-L,z),
$$
where
\begin{equation*}
H_1(z):=e^{\pi|\Im z|} \left\{
\begin{aligned}
& 1, \qquad && |\Im z|>1, \\
& \prod_{k=N(z)}^{N(z)+2}|\lambda_k-z|,\quad && |\Im z|\le 1\quad and\quad \Re z>0, \\
& \prod_{k=-N(z)-2}^{-N(z)}|\lambda_k-z|,\quad && |\Im z|\le 1\quad and\quad \Re z<0,
\end{aligned}
\right.
\end{equation*}
\begin{equation*}
H_2(d,z):=\left\{
\begin{aligned}
& |z|^{-4d},\quad & 0\le |\sin\theta|\le \sin(\pi/(2|z|)), \\
& |z|^{-2d}|\sin\theta |^{2d},\quad &\sin(\pi/(2|z|))<|\sin\theta |\le 1
\end{aligned}
\right.
\end{equation*}
with $N(z)$ being a positive integer dependent on $z$.
\end{lemma}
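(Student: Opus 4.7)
The plan is to compare $\varphi_\Lambda$ with the canonical product $\sin(\pi z)/\pi$ of the unperturbed sequence $\bZ$. Writing $\lambda_n = n + \epsilon_n$ with $|\epsilon_n|\le L<1/2$, I introduce the meromorphic ratio
\begin{equation*}
P(z):=\frac{\pi\,\varphi_\Lambda(z)}{\sin(\pi z)}=\prod_{n\neq 0}\frac{n(\lambda_n-z)}{\lambda_n(n-z)}=\prod_{n\neq 0}\Bigl(1-\frac{\epsilon_n}{\lambda_n}\Bigr)\Bigl(1+\frac{\epsilon_n}{n-z}\Bigr),
\end{equation*}
whose factors satisfy $\log(\cdot)=O(1/n^2)$ at fixed $z$, since the $O(1/n)$ parts of the two logarithms cancel. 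Hence $P$ converges to a meromorphic function with simple zeros at the $\lambda_n$ and simple poles at $n\in\bZ\setminus\{0\}$, and the entire problem reduces to estimating $|\sin(\pi z)|$ and $|P(z)|$ separately.

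Next, I would regroup the factors by isolating the three indices $k\in\{N(z),N(z)+1,N(z)+2\}$ nearest to $\Re z$ (with a symmetric convention for $\Re z<0$):
\begin{equation*}
\varphi_\Lambda(z)=\Biggl[\frac{\sin(\pi z)}{\pi\prod_{k}(k-z)}\Biggr]\cdot\prod_{k}\frac{k(\lambda_k-z)}{\lambda_k}\cdot\prod_{n\neq 0,\,n\notin\{k\}}\frac{n(\lambda_n-z)}{\lambda_n(n-z)}.
\end{equation*}
The first bracket is entire and comparable to $e^{\pi|\Im z|}/|\Im z|^3$ for $|\Im z|>1$, and to a bounded constant for $|\Im z|\le 1$ whenever $\dist(z,\bZ)$ is bounded below. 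The middle product carries $\prod_k|\lambda_k-z|$ times a bounded factor (since $k/\lambda_k\to 1$). Combining these two produces $H_1(z)$: for $|\Im z|>1$, the three distances $|\lambda_k-z|$ are of order $|\Im z|$ and cancel the $|\Im z|^3$ in the first bracket's denominator, leaving $e^{\pi|\Im z|}$; for $|\Im z|\le 1$, the three distances are preserved, producing $e^{\pi|\Im z|}\prod_k|\lambda_k-z|$.

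The remaining tail product yields the angular--radial factor $H_2(\pm L,z)$ via the two-sided bound $(1-L/|n-z|)\le |1+\epsilon_n/(n-z)|\le (1+L/|n-z|)$. Summing logarithms,
\begin{equation*}
\Bigl|\log|P_{\mathrm{tail}}(z)|-c_0\Bigr|\le L\sum_{n\neq 0,\,n\notin\{k\}}\frac{1}{|n-z|}+O(1),
\end{equation*}
and this sum is compared with the integral $\int dt/|t-z|$ over $\bR$ with the three closest integers excised. Splitting according to whether $|t-\Re z|\le |\Im z|$ or not produces the two regimes of $H_2$: when $|\sin\theta|\le \sin(\pi/(2|z|))$, the integral is $4\log|z|+O(1)$, yielding $|z|^{\pm 4L}$; otherwise, it evaluates to $2(\log|z|-\log|\sin\theta|)+O(1)$, yielding $|z|^{\pm 2L}|\sin\theta|^{\mp 2L}$.

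The main obstacle is exactly this last step: extracting sharp exponents in the angular factor requires tracking the cut-off between the three excised integers and the remaining tail alongside the direction $\arg z$. The bookkeeping of logarithmic sums against Riemann integrals, uniformly in both $|z|$ and $\theta$, is the technical heart of the estimate and is carried out in full in \cite{Hinsen}; the conceptual content is simply the factorization $\varphi_\Lambda=(\sin(\pi z)/\pi)\,P$ combined with the three-nearest-zero regrouping above.
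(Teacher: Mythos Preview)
The paper does not actually prove this lemma: it is stated with the attribution ``the following lemma proved in \cite{Hinsen}'' and no argument is given. Your sketch outlines the standard approach---factor $\varphi_\Lambda$ against $\sin(\pi z)/\pi$, peel off the three nearest zeros to produce $H_1$, and estimate the remaining tail product by comparing $\sum 1/|n-z|$ with an integral to produce the angular factor $H_2$---and you likewise defer the technical bookkeeping to \cite{Hinsen}, so there is no discrepancy to report.
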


If $\Lambda$ is a perturbed uniform sequence with $L<\frac{1}{2}$, then $\delta_\Lambda\ge 1-2L$. For every $\epsilon<\frac{1}{2}-L$, we have $|\lambda_k-z|>\epsilon$ whenever $\dist(z,\Lambda)> \epsilon$. By Lemma \ref{estimate}, there exists a constant $C$ such that
$$
\phi_\Lambda(z)=|\varphi_\Lambda(z)|e^{-\pi|\Im z|}\ge C |z|^{-4L},\  \dist(z,\Lambda)> \epsilon.
$$
Thus, if we choose $T_N^+=-T_N^-=N+\frac{1}{2}$, then $\phi_\Lambda(z)\ge C |z|^{-4L}$ on the rectangle $\cL_N$ and $N-\frac{1}{2}-L\le N_*\le N-\frac{1}{2}+L$. Therefore, if $G_N(x)=\exp(-\frac{\pi-\sigma}{2N-2} x^2 )$, then
$$
|f(x)-(\cG_Nf)(x)|=O(N^{4L-\frac{1}{2}} e^{-\frac{\pi-\sigma}{2} N} ),\ \lambda_{-1}<x<\lambda_1.
$$
If $G_N(x)=\exp(-\mu_m(N-1)^{1-2m}x^{2m} )$, we have
$$
|f(x)-(\cG_Nf)(x)|=O(N^{4L-\frac{1}{2}}e^{-\mu_mN}),\ \lambda_{-1}<x<\lambda_1.
$$
Note that when $L=0$, the estimates reduce to the case $\Lambda=\bZ$. When $L\ne0$, the above two results are also new to our best knowledge.

{\small
\bibliographystyle{amsplain}

}

%

\begin{thebibliography}{30}
\bibitem{Annaby}  M. H. Annaby and R. M. Asharabi, Bounds for truncation and perturbation errors of nonuniform sampling series, {\it BIT} \textbf{56} (2016), no. 3, 807--832.

\bibitem{sine crossing}I. Bar-David, An implicit sampling theorem for bounded bandlimited functions, {\it Information and Control.} \textbf{24} (1974), 36--44.

\bibitem{sine crossing2}H. Boche and U.J. M\"onich, Convergence behavior of non-equidistant sampling series, {\it Signal Processing} \textbf{90 (1)} (2010), 145--156.

\bibitem{sampling}J. Bruna, Sampling in complex and harmonic analysis, {\it Eur. Congr. Math.} \textbf{1} (2000), 225--246.

\bibitem{hyperGaussian} L. Chen, Y. Wang, and H. Zhang, Hyper-Gaussian regularized Whittaker-Kotel'nikov-Shannon sampling series, {\it Analysis and Applications}, online, 2022.

\bibitem{ChenZhang} L. Chen and H. Zhang, Sharp exponential bounds for the Gaussian regularized Whittaker-Kotelnikov-Shannon sampling series, {\it J. Approx. Theory} \textbf{245} (2019), 73--82.

\bibitem{sine crossing3}R. Duffin and A. C. Schaeffer, Some properties of functions of exponential type, {\it Bull. Amer. Math. Soc.} \textbf{44} (1938), no. 4, 236--240.


\bibitem{Higgins1} J. R. Higgins, A sampling theorem for irregularly spaced sample points, {\it IEEE Trans. Inform. Theory.} {\bf IT-22} (1976), no. 5, 621--622.

\bibitem{Higgins2} J. R. Higgins, Sampling theorems and the contour integral method, {\it Appl. Anal.} \textbf{41} (1991), no. 1-4, 155--169.

 \bibitem{Higgins} J. R. Higgins, {\it Sampling Theory in Fourier and Signal Analysis: Foundations}, Clarendon Press, Oxford, UK, 1996.

\bibitem{Hinsen1} G. Hinsen, Explicit irregular sampling formulas, {\it
J. Comput. Appl. Math.} \textbf{40} (1992), no. 2, 177--198.

\bibitem{Hinsen} G. Hinsen, Irregular sampling of bandlimited $L^p$-functions, {\it J. Approx. Theory} \textbf{72} (1993), 346--364.

\bibitem{trunction error} D. Jagerman, Bounds for truncation error of the sampling expansion, {\it SIAM J. Appl. Math.} \textbf{14} (1966), 714--723.

\bibitem{entire}B. Y. Levin, {\it Lectures on Entire Functions}, American Mathematical Society, Providence, RI, 1997.

\bibitem{Lin}R. Lin and H. Zhang, Convergence analysis of the Gaussian regularized Shannon sampling series, {\it Numer. Funct. Anal. Optim.} \textbf{38} (2017), no. 2, 224--247.

\bibitem{Margolis}    E. Margolis and Y. C. Eldar, Nonuniform sampling of periodic bandlimited signals, {\it
IEEE Trans. Signal Process.} \textbf{56} (2008), no. 7, part 1, 2728--2745.

\bibitem{optimal}C. A. Micchelli, Y. Xu, and H. Zhang, Optimal learning of bandlimited functions from localized sampling, {\it J. Complexity.} \textbf{25} (2009), 85--114.

\bibitem{Laplace method} M. A. Pinsky, {\it Introduction to Fourier Analysis and Wavelets}, American Mathematical Society, Providence, RI, 2009.

\bibitem{Qian1}L. Qian, On the regularized Whittaker-Kotel'nikov-Shannnon sampling formula, {\it Proc. Amer. Math. Soc.} \textbf{131} (2003), 1169--1176.

\bibitem{Qian2}L. Qian, A modification of the sampling series with a Gaussian multiplier, {\it Sampl. Theory Signal Image Process.} \textbf{5} (2006), 1--19.

\bibitem{rudin}W. Rudin, {\it Functional Analysis}, 2nd edition, McGraw-Hill, Boston, 1991.

\bibitem{sinc}G. Schmeisser and F. Stenger, Sinc approximation with a Gaussian multiplier, {\it Sampl. Theory Signal Image Process.} \textbf{6} (2007), 199--221.
\bibitem{shannon} C. E. Shannon, Communication in the presence of noise, {\it Proc. I.R.E.} \textbf{37} (1949), 10--21.



\bibitem{nonharmonic}R. M. Young, {\it An Introduction to Nonharmonic Fourier Series}, Academic Press, New York, 1980.










\end{thebibliography}
\end{document}